\lstdefinelanguage{scalaish}{
  basicstyle=\small\ttfamily,
  keywords={val, if, then, in, handle, return, def, match, case, new, type, effect, pretype, class, extends, infix, else},
  keywordstyle=\bfseries,
  sensitive=false,
  comment=[l]{//},
  commentstyle=\color{gray},
  stringstyle=\color{gray}, 
  morestring=[b]',
  morestring=[b]"
}
\renewcommand{\comment}[1]{}
\newcommand\ie{\emph{i.e.,}~}
\newcommand{\fsub}{$\textsf{System F}_{<:}~$}
\newcommand\CC{$\textsf{CF}_{<:}~$}
\newcounter{RuleRef}
\newcommand{\ruledef}[1]{%
  \refstepcounter{RuleRef}\textsc{#1}\label{rule:#1}}
\newcommand{\ruledefN}[2]{%
  \refstepcounter{RuleRef}\textsc{#2}\label{rule:#1}}
\newcommand{\ruleref}[1]{\textsc{(\hyperref[rule:#1]{#1})}}
\newcommand{\rulerefN}[2]{\textsc{(\hyperref[rule:#1]{#2})}}
\newcommand{\figurebox}[1]
        {\small\fbox{\begin{minipage}{\textwidth} #1 \medskip\end{minipage}}}
\newcommand{\boxfig}[3]
        {\begin{figure*}\figurebox{#3\caption{\label{#1}#2}}\end{figure*}}
\newcommand{\figref}[1]
        {Figure~\ref{#1}}
\newcommand{\judgement}[2]{{\bf\textsf{#1}} \hfill #2}
\newcommand\kw[1]{\operatorname{\textbf{\textsf{#1}}}}
\definecolor{light-gray}{gray}{0.92}
\newcommand{\highlight}[1]{\colorbox{light-gray}{$\displaystyle #1$}}
\renewcommand\Top{\top}
\newcommand{\Double}{\id{Double}}
\newcommand\Unit{\id{Unit}}
\newcommand{\List}[1]{\id{List}[#1]}
\newcommand\lam[3]{\lambda\left(#1 : #2\right)\!.\; #3}
\newcommand\lambr[3]{\lambda\left(#1 : #2\right)\!.\;\\&#3}
\newcommand\tlam[3]{\Lambda\left[#1 <: #2\right]\!.\; #3}
\newcommand\tlambr[3]{\Lambda\left[#1 <: #2\right]\!.\;\\&#3}
\newcommand\LAM[1]{\lambda\left(#1\right)\!.\;}
\newcommand\TLAM[1]{\forall(#1) \rightarrow}
\newcommand\Tlam[3]{\TLAM{#1 : #2} #3}
\newcommand\Ttlam[3]{\forall\left[#1 <: #2\right] \rightarrow #3}
\newcommand\dom[1]{\operatorname{dom} #1}
\newcommand\cset[1]{\{#1\}}
\newcommand\UC{\cset{*}}
\newcommand\emptycs{\cset{}}
\newcommand\capt[1]{\cset{#1} \;}
\newcommand\CAPT[2]{\cset{#1} \; #2}
\newcommand\Capt[2]{#1 \; #2}
\newcommand\app[2]{#1 \; #2}
\newcommand\tapp[2]{#1\left[#2\right]}
\newcommand\subs[3]{\left[#1 \mapsto #2\right]#3}
\newcommand{\ts}{\,\vdash\,}
\newcommand\sub{<:}
\newcommand\nts{\,\nvdash\,}
\newcommand\typ{:}
\newcommand{\ccomma}{,\;}
\newcommand{\csemi}{\;;\;}
\newcommand\hole{\square}
\newcommand\fv{\operatorname{fv}}
\newcommand\cv{\operatorname{cv}}
\newcommand\id[1]{\operatorname{\mathsf{#1}}}
\newcommand\local[1]{\operatorname{\mathit{#1}}}
\newcommand{\IF}{\kw{if}}
\newcommand{\THEN}{\kw{then}}
\newcommand{\ELSE}{\kw{else}}
\newcommand{\LET}{\kw{let}}
\newcommand{\BOX}{\kw{box}}
\newcommand\IN{\;\mathop{\kw{in}}\;}
\newcommand\HANDLE{\kw{handle}}
\newcommand\wf{\;\mbox{$\kw{wf}$}}
\newcommand{\reduces}{\;\longrightarrow\;}
\newcommand\Return{\id{Return}}
\newcommand\RETURN[1]{#1.\!\kw{return}\;}
\newcommand\Eff{\id{Eff}}
\newcommand\E{\textsf{E}}
\newcommand\eff[1]{#1.\kw{do}}
\newcommand\region[2]{\kw{region} #1 \kw{in} #2}
\newcommand\new[2]{#1.\!\kw{new}[#2]}
\newcommand\Region{\id{Region}}
\newcommand\Ptr{\id{Ptr}}
\newcommand{\gap}{\quad\quad}
\newcommand{\lindent}{\hspace{-4mm}}
\newcommand{\bbox}[1]{\mbox{\textbf{\textsf{#1}}}}
\let\@authorsaddresses\@empty
\begin{document}

\author{Aleksander Boruch-Gruszecki}
\affiliation{
  \department{LAMP – Programming Methods Laboratory}              
  \institution{EPFL}            
  \streetaddress{Station 14}
  \city{Lausanne}
  \postcode{1015}
  \country{Switzerland}                    
}
\email{aleksander.boruch-gruszecki@epfl.ch}          

\author{Jonathan Immanuel Brachth\"auser}
\affiliation{
  \department{LAMP – Programming Methods Laboratory}              
  \institution{EPFL}            
  \streetaddress{Station 14}
  \city{Lausanne}
  \postcode{1015}
  \country{Switzerland}                    
}
\email{Jonathan.Brachthauser@epfl.ch}          

\author{Edward Lee}
\affiliation{
  \department{School of Computer Science}              
  \institution{University of Waterloo}            
  \streetaddress{200 University Ave W.}
  \city{Waterloo}
  \state{ON}
  \postcode{N2L 3G1}
  \country{Canada}                    
}
\email{e45lee@uwaterloo.ca}          

\author{Ond\u{r}ej Lhot\'ak}
\affiliation{
  \department{School of Computer Science}              
  \institution{University of Waterloo}            
  \streetaddress{200 University Ave W.}
  \city{Waterloo}
  \state{ON}
  \postcode{N2L 3G1}
  \country{Canada}                    
}
\email{olhotak@uwaterloo.ca}          

\author{Martin Odersky}
\affiliation{
  \department{LAMP – Programming Methods Laboratory}              
  \institution{EPFL}            
  \streetaddress{Station 14}
  \city{Lausanne}
  \postcode{1015}
  \country{Switzerland}                    
}
\email{martin.odersky @epfl.ch}          

\begin{CCSXML}
<ccs2012>
<concept>
<concept_id>10011007.10011006.10011008</concept_id>
<concept_desc>Software and its engineering~General programming languages</concept_desc>
<concept_significance>500</concept_significance>
</concept>
<concept>
<concept_id>10003456.10003457.10003521.10003525</concept_id>
<concept_desc>Social and professional topics~History of programming languages</concept_desc>
<concept_significance>300</concept_significance>
</concept>
</ccs2012>
\end{CCSXML}

\ccsdesc[500]{Software and its engineering~General programming languages}
\ccsdesc[300]{Social and professional topics~History of programming languages}

\title{Tracking Captured Variables in Types}

\begin{abstract}
  Type systems usually characterize the shape of values but not their free variables.
  However, there are many desirable safety properties one
  could guarantee if one could track how references can escape.
  For example, one may implement {\it algebraic effect handlers} using
  {\it capabilities} -- a value which {\it permits} one to perform
  the effect -- safely if one can guarantee that the
  capability itself does not escape the scope bound by the effect handler.
  To this end, we study the \CC calculus, a conservative and
  lightweight extension of $\textsf{System F}_{<:}~$, to track how values and
  their references can be captured and escape. We show that existing terms in
  \fsub embed naturally in our calculus, and that many natural problems can be
  expressed in a system that tracks variable references like we do in
  \CC. We also give mechanized proofs of the soundness
  properties of \CC in Coq. The type system presented in
  \CC is powerful enough to reason about safety in the context of
  many natural extensions of \CC such as {\it region-based
    memory-management}, {\it non-local returns}, and {\it effect handlers}.
\end{abstract}

\maketitle

\renewcommand{\shortauthors}{Boruch-Gruszecki, Brachthäuser, Lee, Lhot\'ak, and Odersky}

\hypertarget{the-capture-calculus}{%
\section{Introduction}\label{the-capture-calculus}}

Computing the \emph{free variables} of a term is one of the most basic operations that
students of programming language theory are exposed with. Yet, it has significant
relevance, not only in meta-theory -- but as we will study in this paper -- also as
a programming device. In particular, combined with an object-capability discipline \citep{miller2006robust}
the free variables of a term inform us about the \emph{authority} of this term.
In general, free variables can be used to express \emph{global} capabilities, restricting access
to privileged operations (like using FFI, accessing the network, reading, writing to files, etc.)
to the holders of the corresponding capabilities.
They also can be used to phrase \emph{effect safety} in terms of capability safety:
to establish effect safety, it is important to guarantee that \emph{local} capabilities,
introduced by exception (or effect) handlers, do not leave the corresponding handler.
One particular problem related to analyzing whether a capability escapes is \emph{capture},
that is, function values closing over capabilities. By means of capture,
a capability can indirectly (and potentially unnoticed) flow to some other component, transferring the privileges.

Motivated by the above mentioned use cases, in this paper we internalize the concept
of free variables and introduce \CC, a calculus equipped with a type system
based on the idea to track the \emph{free variables of a value} in its type,
thereby making capture visible.
\CC builds on \fsub and enriches its types to allow tracking
captured variables.

\paragraph{Tracking variables in capture sets}
Specifically, we make two significant additions.
First, we introduce a notion of \emph{tracked} variables to represent resources,
capabilities, and other information that should be tracked by the type system.
Second, we augment \fsub~ types with \emph{capture sets} $\{x_1, \hdots x_n\}$.
Terms of the type $\Capt{\{x_1, \hdots, x_n\}}{U}$ represent expressions of type $U$
whose reduced values may only refer to (\ie capture) tracked variables
in the set $\{x_1, \hdots, x_n\}$.
These concepts are illustrated in the following example.
\definecolor{capabilitycolor}{RGB}{39, 121, 156}
\definecolor{capabilitybg}{RGB}{225, 239, 245}
\definecolor{hlbg}{RGB}{255, 179, 102}
\definecolor{hlcolor}{RGB}{179, 59, 0}
\definecolor{stringcolor}{RGB}{0, 130, 91}
\newcommand{\capability}[1]{\colorbox{capabilitybg}{{\color{capabilitycolor}{$\sf{#1}$}}}}
\newcommand{\Logger}[0]{{\sf{Logger}}}
\newcommand{\hl}[1]{\colorbox{hlbg}{{\color{hlcolor}{$\sf{#1}$}}}}
\newcommand{\filecap}[0]{\capability{File}}
\newcommand{\consolecap}[0]{\capability{Console}}
\newcommand{\str}[1]{{\color{stringcolor}{\texttt{"#1"}}}}
\newcommand\call[2]{#1.{\textbf{#2}}\;}
\newcommand{\lineVar}{\textit{line}}
\newcommand{\logVar}{\textit{log}}
{\small
\begin{align*}
&\id{fileLogger} : \CAPT{\filecap} \Logger\\
&\id{fileLogger} = \LAM{\lineVar} \call{\filecap}{append} \str{log.txt}\; \lineVar\
\end{align*}
}

\noindent
Here the function type $\Logger = {\sf{String}} \rightarrow {\sf{Unit}}$ is annotated with a capture set $\cset{\filecap}$ making
visible in the type that the body of function $\id{fileLogger}$ closes over the global capability $\filecap$, which is a tracked variable.
In the same way, we can define alternative logger implementations that close over different capabilities:

{%
\begin{minipage}[t]{0.5\textwidth}
{\small%
\begin{align*}
&\id{printLogger} : \CAPT{\consolecap} \Logger\\
&\id{printLogger} = \LAM{\lineVar} \call{\consolecap}{println} \lineVar
\end{align*}}
\end{minipage}
\begin{minipage}[t]{0.5\textwidth}
{\small%
\begin{align*}
&\id{pureLogger} : \CAPT{} \Logger\\
&\id{pureLogger} = \LAM{\lineVar} ()
\end{align*}}
\end{minipage}
}

\paragraph{Capture polymorphism}
For additional expressivity, our calculus also supports some form of \emph{capture polymorphism}. That is,
variables bound by lambda abstractions can be used in types to refer to the free variables (the capture set)
of the evaluated argument.

{%
\begin{minipage}[t]{0.5\textwidth}
{\small
\begin{align*}
  &\id{warn} : \CAPT{} \TLAM{\logVar : \UC \; \Logger} \CAPT{\logVar} \Logger\\
  &\id{warn} = \LAM{\logVar} \LAM{\lineVar} \logVar \;(\str{[WARN]} + \lineVar)
\end{align*}}
\end{minipage}
\begin{minipage}[t]{0.5\textwidth}
{\small
\begin{align*}
  &\id{myLogger} : \CAPT{\consolecap} \Logger\\
  &\id{myLogger} = \id{warn}\; \id{printLogger}
\end{align*}}
\end{minipage}
}

The type of $\mathsf{warn}$ reads as ``given an argument logger $\logVar$, with an unknown capture set $\UC$ the returned value of type $\Logger$ may close over $\logVar$''. The type of the function shows that we introduce a simple form of term dependency.
For the reader's convenience, we visually distinguish capabilities (like
$\filecap$) from variables (like $\logVar$). The former will remain free
under reduction while the latter will eventually be
substituted away in capture sets, as can be seen in the type of
$\mathsf{myLogger}$. There, passing $\mathsf{printLogger}$ to the
capture polymorphic function $\mathsf{warn}$ substitutes $\logVar$ in the result
of $\mathsf{warn}$ with $\cset{\consolecap}$, resulting in
$\CAPT{\consolecap} \Logger$.

\paragraph{Subcapturing}
Building on \fsub, our calculus enables subtyping on capture sets, which we refer to as \emph{subcapturing}.
In our example, we have that $\CAPT{} \Logger \sub \CAPT{\filecap} \Logger$ since $\emptycs$ is a subset of $ \cset{\filecap}$.

\paragraph{Capture prediction}
From a programmers perspective, the capture set $C$ on a function type like $C \; \Logger$
provides us with an \emph{upper bound} on the free variables of values of this type (Corollary \ref{lemma:predict-term}).
That is, the function body can only use those capabilities explicitly passed to the function and those mentioned in $C$.
For example, the type $\CAPT{\consolecap} \Logger$ informs us that $\mathsf{printLogger}$ might at most use $\consolecap$,
but not (for example) access files by means of the $\filecap$ capability.
Capture prediction equips us with knowledge about the capture of \emph{values}, not that of arbitrary terms.
The difference is illustrated in the following example term.
{\small
  \begin{align*}
    &\id{someLogger} : \CAPT{\highlight{?}} \Logger\\
    &\id{someLogger} = \kw{if}\; (\LAM{\logVar} \kw{true})\;\id{consoleLogger} \;\kw{then}\; \id{printLogger} \;\kw{else}\; \id{pureLogger}
  \end{align*}}

\noindent
The term $\id{someLogger}$ will either reduce to $\id{printLogger}$ or $\id{pureLogger}$, but
its definition mentions $\id{consoleLogger}$ as well. What should the highlighted capture
set of $\id{someLogger}$ be? In \CC the capture set on a type predicts the free variables of
the value that term reduces to. In our example, we can type $\id{someLogger}$ with the following type
{\small
  \begin{align*}
    \id{someLogger} : \CAPT{\filecap} \Logger
  \end{align*}}
\noindent
since both branches can be typed against $\CAPT{\filecap} \Logger$. The fact that
the condition also refers to $\id{consoleLogger}$ is irrelevant for the typing of the returned value.
Importantly, this correctly allows us to predict that $\id{warn}\;\id{someLogger}$ cannot possibly reference
$\consolecap$.

\paragraph{Applications}
While the above examples can provide a good first intuition, it is important
to note that simply harnessing the power of free variables, our calculus is
completely parametric in the semantics of global capabilities.
In general, being able to predict the free variables
of the value that a given term reduces to, we are able to develop soundness arguments for
the following applications:
\begin{enumerate}
  \item {\it Safe algebraic effects:} One can add safe algebraic effects to a purely functional core language
    by modelling them as {\it capabilities}. {\it Capabilities} are regular values that are introduced by
    special program constructs. For a concrete example, consider the algebraic effect of {\it throwing an exception}.
    In the case of exceptions, the capability to raise an exception could be
    introduced by a \emph{try} handler. We would like to
    ensure that exceptions can be raised only when they are handled by
    an enclosing \emph{try}. This means we need to make sure that the \emph{"can-raise-exception"} capability (which is
    a regular value) cannot escape the scope of the \emph{try} as a free variable in its result value.
    The type system presented in this paper can be used to enforce such a constraint, as we show
  in Sections \ref{non-local-returns} and \ref{general-algebraic-effects}.
  \item {\it Regions:} A {\it region} is a lexically delimited scope in which values
    can be allocated.  One concrete example would be a local variable to a function with
    stack-allocated local values.  After the region is exited,
    in order to be sound, one needs to ensure that there are no {\it dangling references} to values that were allocated within
    the region.  \CC can be used to enforce this restriction in order to ensure soundness,
    as we show in Section \ref{region-extension}.
\end{enumerate}
Moreover, there are many other applications which can be shown sound using similar arguments
to the ones we have presented in this work.  For example, ensuring that a {\it handle} to a resource
does not leak after it has been {\it closed} is a very similar problem to ensuring that {\it references}
to a stack allocated value in a {\it region} do not leak after the region has been {\it deallocated}.
In the remainder of the paper, we introduce the calculus \CC and illustrate its use. In particular, as we will see,
while the idea of tracking free variables in the type appears very intuitive, the interaction with subtyping proved to be
challenging and required several iterations of careful tradeoffs.

\subsection{Contributions}
Concretely, this paper makes the following contributions.

\begin{itemize}
  \item[--] We develop type-theoretic foundations of tracking free variables,
            resulting in a new calculus \CC (Section \ref{sec:cc-formally}).
            The calculus enhances types with additional information about variable
            capture, recorded in \emph{capture sets}. Subset inclusion of capture sets
            immediately motivates the need for subtyping. In consequence, we formalize \CC as a modest
            extension to \fsub.

  \item[--] We prove the standard soundness theorems (Section \ref{metatheory}).
            We capture the essence of \CC in Corollary \ref{lemma:predict-term}, which
            shows that capture sets are meaningful and provide a conservative approximation
            of the free variables of a value. The paper is accompanied by a fully mechanized
            soundness proof using the Coq theorem prover.

  \item[--] We show the applicability of \CC to a wide range of interesting applications,
            including systems describing regions, effects, or capabilities (Section \ref{sec:extensions}).
            In various extensions to the calculus, we make use of the fact that capture sets are meaningful,
            which implies that they can be used as a sound mechanism to prevent variables from escaping.
\end{itemize}

\section{The \CC Calculus}
\label{sec:cc-formally}

In this section, we formally introduce \CC, which allows us to discuss important meta theoretic aspects,
such as soundness (Theorems \ref{th:progress} and \ref{th:preservation}) and capture prediction (Corollary \ref{lemma:predict-term}). The core calculus presented in this language merely provides all necessary means to track free variables in types. In Section \ref{sec:extensions}, we extend \CC with additional features that put the tracking into use.

\hypertarget{abstract-syntax}{%
\subsection{Syntax of Terms and Types}\label{abstract-syntax}}

Figure \ref{fig:syntax} defines the syntax of \CC. Our language builds on \fsub with the following changes:

\subsubsection*{Types and Pretypes} We make a distinction between {\it pretypes} $U$
  and {\it types} $T$. Each type has a single capture set associated with it. In
  contrast, {\it pretypes} are ``incomplete'' types not yet associated with a capture set.
  All \fsub types save for type variables $X$ are pretypes in our
  calculus. A type variable stands for a complete type and, accordingly, is not
  a pretype. As usual, typing contexts $\Gamma$ can contain
  both term bindings $x : T$ and type bindings $X <: T$.
\subsubsection*{Capture Sets} Values of type $C \; U$ can be
  viewed as values of type $U$ that might contain occurrences of variables in
  $C$. Capture sets are $C$ are either a finite set of variables or the special
  set $\UC$ that conceptually represents a set containing every variable. Values
  of the type $\CAPT{} U$ are {\it pure} as they cannot capture tracked
  variables.

\subsubsection*{Function types in \CC are dependent} Function types in our calculus have a
fundamental difference compared to their kin in \fsub; instead of $S \rightarrow T$, we write
$\TLAM{x : S} T$, where $x$ names the bound parameter. This binding is needed
since $x$ may be used as a variable in the capture sets embedded in $T$.
Note that capture tracking is the only form of term-dependency in our calculus.

\noindent Observe that our core calculus does not have any base capabilities and
does not even distinguish syntactically between variables and capabilities. In
\ref{sec:extensions}, we will demonstrate that it is possible to extend the core
we present here with capabilities by treating them as variables.

\boxfig{fig:syntax}{The syntax of the \CC calculus.}{
  \judgement{\textsf{Syntax}}{}

  \begin{center}
  \[
  \begin{array}[t]{l@{\hspace{10mm}}l}

  x, y & \lindent{\bbox{Term Variable}} \\

  C ::= & \lindent{\bbox{Capture set}} \\
  \gap \{x_1, x_2, \ldots, x_n\} & \mbox{concrete capture set} \\
  \gap \UC & \mbox{universal capture set} \\

  \Gamma ::= & \lindent{\bbox{Environment}} \\
  \gap \cdot & \\
  \gap \Gamma \ccomma x : T & \\
  \gap \Gamma \ccomma X <: T & \\

  v, w ::= & \lindent{\bbox{Value}} \\
  \gap \LAM{x : T} t & \mbox{term abstraction} \\
  \gap \tlam{X}{S}{t} & \mbox{type abstraction} \\

  s, t ::= & \lindent{\bbox{Term}} \\
  \gap v & \mbox{value} \\
  \gap x & \mbox{variable} \\
  \gap \app{s}{t} & \mbox{application} \\
  \gap \tapp{t}{T} & \mbox{type application} \\

  R, S, T ::= & \lindent{\bbox{Type}} \\
  \gap X, Y & \mbox{type variable} \\
  \gap \Capt{C}{U} & \mbox{U capturing C} \\

  U, V, W ::= & \lindent{\bbox{Pretype}} \\
  \gap \Top & \mbox{top type} \\
  \gap \TLAM{x : S} T & \mbox{term function} \\
  \gap \Ttlam{X}{S}{T} & \mbox{type function} \\

  \end{array}
  \]
  \end{center}

}

\hypertarget{preliminaries}{%
\subsection{Preliminaries}\label{preliminaries}}

\begin{enumerate}
\item The universal capture set $\UC$ conceptually represents a set of all tracked variables.
  Set union and set difference are extended to the universal capture set $\UC$ as follows:
\[
\begin{array}[t]{l}
\UC \cup C = C \cup \UC = \UC \\
\UC \setminus C = \UC \\
\end{array}
\]
\item Substitution of capture sets $\subs{x}{C_1}C_2$ is defined as
follows:

\[
\begin{array}[t]{l@{\hspace{2mm}}l@{\hspace{4mm}}l}
  \subs{x}{C_1}{C_2} &= (C_2 \setminus \{x\}) \cup C_1 & \operatorname{if} x \in C_2 \\
                     &= C_2 &\operatorname{otherwise}
\end{array}
\]

Substitution is lifted as a homomorphism to types and pretypes, with

\[
 \subs{x}{C_1}{(C_2 \; U)} = \subs{x}{C_1}{C_2} \; \subs{x}{C_1}{U}
\]

and to terms $\subs{x}{C}{t}$, substituting capture sets in type positions.

\item The free variables $\fv(t)$ of a term $t$ only consider variables in term position;
  they do not include variables that are free but only occur in a capture set in a type
  which occurs in $t$.

\item The capture set $\cv(T, \Gamma)$ of a type
$T$ in a context $\Gamma$ is defined as follows:

\[
\begin{array}[t]{l@{\hspace{2mm}}l@{\hspace{2mm}}l}
  \cv(\Capt{C}{U}, \Gamma) &= C \\
  \cv(X, \Gamma) &= \cv(T, \Gamma) & \mbox{$\operatorname{if} X <: T \in \Gamma$.}
\end{array}
\]

\end{enumerate}

\hypertarget{evaluation}{%
\subsection{Evaluation}\label{evaluation}}

Evaluation in \CC{} is almost exactly the same as in
call-by-value \fsub{}. \figref{fig:reduction} defines the operational semantics
with a
single congruence rule that takes an evaluation context $E$.
The only major change to the reduction semantics in \CC{} compared to \fsub{} is that
reducing a term application $\app{\left(\LAM{x : T} t\right)}{v}$ with
\ruleref{beta-v}, we also need to substitute the occurrences of the lambda parameter
$x$ in capture set positions inside $t$. A value captures exactly the free
variables it references, so we substitute with $\fv(v)$.
The calculus we present is specialised for call-by-value semantics, as we can
see in Lemma \ref{lem:subst_ee'} -- term substitution preserves typing only if we
substitute with values. To see why, recall the $\id{warn}\;\id{someLogger}$
example from the introduction -- the type we assigned to this term took into
account that $\id{someLogger}$ will be reduced before the substitution. If
desired, the typing rules of \CC could be adjusted to account for the fact that
in call-by-name semantics, function application can capture more than in
call-by-value.

\boxfig{fig:reduction}{Small step operational semantics of the \CC calculus.}{
  \judgement{Evaluation}{\fbox{$t \reduces t$}}

  \infax[\ruledefN{beta-v}{beta-v}]{%
    \app{\left(\lam{x}{T}{t}\right)}{v} \reduces \subs{x}{v}{\highlight{\subs{x}{\fv(v)}{t}}}
  }

  \infax[\ruledefN{beta-T}{beta-T}]{%
    \tapp{\left(\tlam{x}{S}{t}\right)}{T} \reduces \subs{x}{T}{t}
  }

  \infrule[\ruledef{context}]{%
    t_1 \reduces t_2
  }{%
    \E[t_1] \reduces \E[t_2]
  }


  \begin{center}
    $
    \begin{array}[t]{l@{\hspace{10mm}}l}
      \E ::=\; \hole \;\mid\; \app{\E}{t} \;\mid\; \tapp{\E}{T} \;\mid\; \app{v}{\E} &  \lindent{\bbox{Evaluation context}} \\
    \end{array}
    $
  \end{center}
}

\hypertarget{subcapturing-rules}{%
\subsection{Subcapturing Rules}\label{subcapturing-rules}
} Subcapturing $\Gamma \ts C_1 \sub C_2$ is defined on capture sets as shown in
\figref{fig:subcapturing} -- both transitivity and reflexivity are
admissible. If one set subsets another, it also subcaptures it, but the opposite
is not necessarily true. The reason for that is that our capture sets are {\it
  indirect}.
For instance, under a variable binding $x: \CAPT{y} \Logger$, if a term
captures $x$ then intuitively it also indirectly captures $y$. Such a term will
have the capture set $\{x\}$, not $\{x,y\}$. Under such a binding, we would be
able to deduce that $\{x\} \sub \{y\}$, using rule \ruleref{sc-var}.
However, note that the converse is not true: we do not have $\{y\} \sub \{x\}$
as $\{y\}$ is not as precise; $x$ may be instantiated with a pure value which
can only capture pure values.
In general, a term with a type of the form $\CAPT{x} U$ can capture {\it no
  more} than $x$ -- however, it can potentially capture less.
In other words, capture set ascriptions on lambda parameters are upper
bounds on what the actual argument may capture.
These two notions -- indirect capture sets and capture sets being
only upper bounds -- are what enables our approach to capture
polymorphism. Recall the $\mathsf{warn}$ function:
{\small
  \begin{align*}
    &\id{warn} : \CAPT{} \TLAM{\logVar : \UC \; \Logger} \CAPT{\logVar} \Logger\\
    &\id{warn} = \LAM{\logVar} \LAM{\lineVar} \logVar \;(\str{[WARN]} + \lineVar)
  \end{align*}}
\noindent It is also possible to type $\id{warn}$ as
${\CAPT{} \TLAM{\logVar : \UC\;\Logger} \UC\;\Logger}$ -- indeed, if our capture
sets directly contained all their transitive members, this would be the only
logical choice. However, by doing that we would lose the type-level knowledge that
the result of $\id{warn}$ captures no more than its argument. And if we are to
exploit this knowledge, we obviously must also allow arguments to be typechecked
with capture sets smaller than $\UC$.

Finally, rule \ruleref{sc-var} also allows pure variables to be dropped from
capture sets; if $x$ of type $T$ is pure, this means
$\cv(T, \Gamma) = \emptycs$, hence $\{x\}$ is in a subcapturing relation with
any capture set, including the empty set. For a concrete example, we can derive
$ y : \{\} \Top, x : \UC \Top \ts \{x,y\} \sub \{x\}$.

\comment{
What is a practical consequence of our capture sets being indirect?

Assume $\Gamma \triangleq \id{printLogger} : \CAPT{\consolecap} \Logger$.

Then $\Gamma \nvdash  \{\consolecap\} <: \{\id{printLogger}\}$.

Take $t \triangleq \LAM{x} \id{printLogger} \; \str{msg}$.

Then, $\Gamma \ts t : \CAPT{\id{printLogger}} \Unit$.
}

\boxfig{fig:subcapturing}{Subcapturing of capturesets in the \CC calculus.}{
  \judgement{Subcapturing}{\fbox{$\Gamma \ts C \sub C$}}

  \infax[\ruledef{sc-*}]{%
    \Gamma \ts C \sub \UC}

  \infrule[\ruledef{sc-distl}]{%
    \Gamma \ts \cset{x_1} \sub C, \ldots, \Gamma \ts \cset{x_n} \sub C
  }{%
    \Gamma \ts \cset{x_1, \ldots, x_n} \sub C}

  \infax[\ruledef{sc-distr}]{%
    \Gamma \ts \cset{x_i} \sub \cset{x_1, x_2, \ldots, x_n}
  }

  \infrule[\ruledef{sc-var}]{%
    \Gamma \ts x : T \gap \Gamma \ts \cv(T, E) \sub C
  }{%
    \Gamma \ts \cset{x} \sub C}

%
}

\hypertarget{subtyping-rules}{%
\subsection{Subtyping Rules}\label{subtyping-rules}}

Due to the type/pretype split, there are technically two subtyping judgements,
as shown in Figure \ref{fig:subtyping}; one for types with rules \ruleref{capt}
and \ruleref{tvar} and one for pretypes with rules \ruleref{fun},
\ruleref{tfun}, and \ruleref{top}. Reflexivity and transitivity apply to each
kind of judgement; they are the only duplicated rules.
Note that the subtyping rules are a straightforward extension of the subtyping rules
for \fsub; the only significant departure is the addition of \ruleref{capt}
for reasoning with capture sets in types.

\boxfig{fig:subtyping}{Subtyping of types (and pretypes, correspondingly) in the \CC calculus.}{
  \judgement{Subtyping}{\fbox{$\Gamma \ts T \sub T$}}

  \infax[\ruledef{refl-type}]{%
    \Gamma \ts T \sub T}

  \infrule[\ruledef{trans-type}]{%
    \Gamma \ts R \sub S \gap \Gamma \ts S \sub T
  }{%
    \Gamma \ts R \sub T}

  \infrule[\ruledef{tvar}]{%
    X \sub T \in \Gamma
  }{%
    \Gamma \ts X \sub T}

  \infrule[\ruledef{capt}]{%
    \Gamma \ts C_1 \sub C_2 \gap
    \Gamma \ts U_1 \sub U_2 
  }{%
    \Gamma \ts \Capt{C_1}{U_1} \sub \Capt{C_2}{U_2}}

  \begin{center}\rule{12cm}{0.4pt}\end{center}

  \infax[\ruledef{refl-pretype}]{%
    \Gamma \ts U \sub U}
  \infrule[\ruledef{trans-pretype}]{%
    \Gamma \ts U \sub V \gap \Gamma \ts V \sub W
  }{%
    \Gamma \ts U \sub W}

  \infax[\ruledef{top}]{%
    \Gamma \ts U \sub \top}

  \infrule[\ruledef{fun}]{%
    \Gamma \ts S_2 \sub S_1 \gap \Gamma \ccomma x: S_2 \ts T_1 \sub T_2
  }{%
    \Gamma \ts \Tlam{x}{S_1}{T_1} \sub \Tlam{x}{S_2}{T_2}}

  \infrule[\ruledef{tfun}]{%
    \Gamma \ts S_2 \sub S_1 \gap \Gamma \ccomma X <: S_2 \ts T_1 \sub T_2
  }{%
    \Gamma \ts \Ttlam{X}{S_1}{T_1} \sub \Ttlam{X}{S_2}{T_2}}

%

}

\FloatBarrier
\hypertarget{typing-rules}{%
\subsection{Typing Rules}\label{typing-rules}}

\boxfig{fig:typing}{Typing rules of the \CC calculus.}{
  \judgement{\textsf{Typing}}{\fbox{$\Gamma \ts t \typ T$}}

  \infrule[\ruledef{var-concrete}]{%
    x: \Capt{C}{U} \in \Gamma
  }{%
    \Gamma \ts x \typ \Capt{\{x\}}{U}}
  \infrule[\ruledef{var-tvar}]{%
    x: X \in \Gamma
  }{%
    \Gamma \ts x \typ X}
  \begin{center}\rule{12cm}{0.4pt}\end{center}

  \infrule[\ruledef{sub}]{%
    \Gamma \ts t \typ T
    \gap
    \Gamma \ts T \sub S
  }{%
    \Gamma \ts t \typ S }

  \infrule[\ruledef{abs}]{%
    \Gamma \ccomma x: S \ts t \typ T \gap \Gamma \ts \Tlam{x}{S}{T} \wf 
  }{%
    \Gamma \ts \lam{x}{S}{t} \ \typ\  \Capt{C}{\Tlam{x}{S}{T}} \\
    \kw{where} C = \fv(\lam{x}{S}{t})
  }


  \infrule[\ruledef{app}]{%
    \Gamma \ts t \typ \Capt{C}{\Tlam{x}{S}{T}} \gap \Gamma \ts s \typ S
  }{%
    \Gamma \ts \app{t}{s} \typ \subs{x}{\cv(S, \Gamma)}{T}}

  \infrule[\ruledef{t-abs}]{%
    \Gamma \ccomma X <: S \ts t \typ T \gap \Gamma \ts \Ttlam{x}{S}{T} \wf
  }{%
    \Gamma \ts \tlam{X}{S}{t} \ \typ\  \Capt{C}{\Ttlam{X}{S}{T}} \\
    \kw{where} C = \fv(\tlam{X}{S}{t})
  }

  \infrule[\ruledef{t-app}]{%
    \Gamma \ts t \typ \Capt{C}{\Ttlam{X}{R}{T}} \gap \Gamma \ts S \wf\\
    \Gamma \ts S \sub R
  }{%
    \Gamma \ts \tapp{t}{S} \typ \subs{X}{S}{T} }

%
%
%
%

}

There are four major differences between \fsub typing rules and typing rules for \CC, described
in Figure \ref{fig:typing}.

\subsubsection*{Capture sets on function values}
The \ruleref{abs} and \ruleref{t-abs} rules augment the
result type of the abstracted function with all variables
that are free in the abstracted term; the type of a value $v$ well-typed in $\Gamma$ is of the form $\fv(v) \; U$,
that is the pretype $U$ annotated with the capture set $\fv(v)$.
Observe here that $\cv(\fv(v) \; U) = \fv(v)$, and in general,
for a term $t$ of type $T$ reducing to a value $v$ we have that $\Gamma \ts \fv(v) \sub \cv(T, \Gamma)$.
This is made formal in Section \ref{metatheory} and by Corollary \ref{lemma:predict-term}.
Once again, note that one may immediately drop pure variables from that capture
set by applying subtyping and rule \ruleref{sc-var}.

\subsubsection*{Application}
In rule \ruleref{app}, the result of the function application is the result
type of the function where the bound variable $x$ is substituted with the
capture set of the argument type $S$. This resembles function application for dependent
function types except that the dependencies are restricted to variable tracking.
The capture set $C$ of the function $t$ itself is discarded in an application.

\subsubsection*{Split variable typing rules}
Our calculus has two different rules for typing variables, depending on whether
a variable $x$ is bound to a concrete type $C \; U$ or to a type variable
$X$ in the environment. Intuitively, the capture set of the variable should be
the variable itself, which is indeed the case if it is bound to a concrete type.
This is not only intuitive, but also a desirable property -- for example,
consider that the type of the term $\LAM{x : \UC \; \top} x$ should be
$\TLAM{x : \UC \; \top} \CAPT{x} \top$, reflecting that the capture
set of the returned value is the same as the capture set of the argument passed
in as $x$.
However, since we may not further annotate a type variable with a capture set,
the type of $\LAM{x: X} x$ cannot be $\TLAM{x : X} \CAPT{x} X$ and has to
be $\TLAM{x : X} X$. Accordingly, we have a separate rule for typing term
variables bound to type variables.

\subsubsection*{Well-formedness constraints} Both \ruleref{abs} and \ruleref{t-abs}
  explicitly require the types they assign to terms to be well-formed. We
  discuss this in the following section.

\hypertarget{well-formedness}{%
\subsection{Well-formedness}\label{well-formedness}}

\FloatBarrier

\boxfig{fig:well-formedness}{Well-formedness of types in the \CC calculus -- term variables are only allowed to occur in covariant positions. }{
  \judgement{Well-formedness}{\fbox{$\Gamma \csemi A_+ \csemi A_- \ts T \wf$}}

  \infrule[\ruledef{capt-wf}]{%
    \highlight{C \subseteq A_+} \gap 
    \forall x_i \in C. \; x_i : S_i \in \Gamma \gap
    \Gamma \csemi A_+ \csemi A_- \ts U \wf
  }{%
    \Gamma \csemi A_+ \csemi A_- \ts \Capt{C}{U} \wf }

  \infrule[\ruledef{universe-wf}]{%
    \Gamma \csemi A_+ \csemi A_- \ts U \wf
  }{%
    \Gamma \csemi A_+ \csemi A_- \ts \Capt{\UC}{U} \wf }

  \infrule[\ruledef{tvar-wf}]{%
    X <: T \in \Gamma
  }{%
    \Gamma \csemi A_+ \csemi A_- \ts X \wf }

  \begin{center}\rule{12cm}{0.4pt}\end{center}

  \infrule[\ruledef{fun-wf}]{%
    \Gamma \csemi A_- \csemi A_+ \ts S \wf
    \gap
    \highlight{\Gamma \ccomma x: S \csemi A_+ \cup \{x\} \csemi A_- \ts T \wf}
  }{%
    \Gamma \csemi A_+ \csemi A_- \ts \Tlam{x}{S}{T} \wf }

  \infrule[\ruledef{tfun-wf}]{%
    \Gamma \csemi A_- \csemi A_+ \ts S \wf
    \gap
    \highlight{\Gamma \ccomma X <: S \csemi A_+ \csemi A_- \ts T \wf}
  }{%
    \Gamma \csemi A_+ \csemi A_- \ts \Ttlam{X}{S}{T} \wf }

  \infax[\ruledef{top-wf}]{%
    \Gamma \csemi A_+ \csemi A_- \ts \Top \wf }

%
}

In \fsub, a type is well-formed simply if all type variables mentioned in it are
bound in the environment. Our corresponding judgment is more complicated: it
also tracks the variance at which term variables appear in capture sets embedded within a type.

We need this restriction because of a difference between evaluation and typing.
When typing term application with \ruleref{app}, we substitute the
lambda's parameter $x$ with the $\cv$ of the argument's type $S$ -- indeed, there is not much else we can do.
In contrast, when reducing application
with rule \ruleref{beta-v}, we substitute $x$ with the {\it free variables} of the argument.
However, we only have that $\Gamma \ts \fv(v) \sub \cv(S, \Gamma)$ -- this subcapturing relation may be strict.
There are two ways to think about this fact that we have found intuitive.
One is that the capture set of the argument's type can be widened through subtyping and subcapturing; another is that the capture set of the argument's type is term-dependent, and hence
can shrink under reduction.
To illustrate this, let us consider the term:
\[\id{f} = \LAM{x : \UC \; U} \LAM{y : \CAPT{x} U} y\]
applied to a pure value $v$ of type $\emptycs \; U$.  Notice that $x$ occurs contravariantly in the capture set of parameter $y$. By \ruleref{beta-v}, $\id{f}(v)$ reduces
to $\LAM{y : \CAPT{} U} y$, with type $\CAPT{} \TLAM{y : \CAPT{} U} \CAPT{y} U$.
However, by applying the subtyping rule \ruleref{capt}, we may also assign $v$ the type $\UC \; U$,
and hence type the application $\id{f}(v)$ with the type $\CAPT{} \TLAM{y : \UC \; U} \CAPT{y} U$.
This is unsound, as the function type $\CAPT{} \TLAM{y : \CAPT{} U} \CAPT{y} U$ is categorically {\bf not}
a subtype of $\CAPT{} \TLAM{y : \UC \; U} \CAPT{y} U$; it can be applied to strictly fewer
values.

This motivates our well-formedness judgement, shown in Figure \ref{fig:well-formedness}, which is defined over a triple $\Gamma \csemi A_+ \csemi A_- \ts T \wf$.
Here, $\Gamma$ is the standard environment and $A_+$ and $A_-$ are sets of term variables.  A term variable $x$ can appear covariantly
only if it occurs in $A_+$, and contravariantly only if it occurs in $A_-$.
For brevity, we write $\Gamma \csemi A_+ \csemi A_- \ts T \wf$ where $A_+$ and $A_-$
are sets of both term and type variables in place of
$\Gamma \csemi A_+ \cap D \csemi A_- \cap D \ts T \wf$ where $D$ is the set of
term variables bound in $\Gamma$.  We also write $\Gamma \ts T \wf$
in place of $\Gamma \csemi \dom(\Gamma) \csemi \dom(\Gamma) \ts T \wf$.
To ensure that subtyping holds
with respect to our term-dependent capture sets, we enforce that a term variable $x$ in a type $T$ can only occur
in covariant position with respect to its binding form in the type by the rules \ruleref{capt-wf}, \ruleref{fun-wf} and \ruleref{tfun-wf}.
This notion is formalized in Section \ref{metatheory}.

As we have seen, the well-formedness condition prevents direct coupling of capture sets
at different polarities. This is less of a restriction than it might seem, since we can
express the same coupling going through a type variable. Here is a version of function
$\mathsf{f}$ that typechecks:
\[ \mathsf{f'} = \tlam{X}{\Capt{\UC}{U}}{\lam{x}{X}{\lam{y}{X}{y}}} \]
Note also that the well-formedness restriction only applies to the variables bound locally
in a type, not to the variables in the global environment. So the function
\[
  \mathsf{g} = \lam{x}{\Capt{\UC}{U}}
    \app{(\lam{y}{\capt{x}{U}}{y})}{x}
\]
is well typed with type $\Tlam{x}{\Capt{\UC}{U}}{\Capt{\{x\}}{U}}$,
even though $x$ is captured at negative polarity in the second lambda.

\FloatBarrier

\hypertarget{metatheory}{%
  \subsection{Metatheory}\label{metatheory}}
We now discuss a few interesting metatheoretic properties of \CC.
The paper is accompanied by a mechanization using the Coq theorem prover, described in more
detail in Section \ref{mechanization}.
We start by observing that \CC is indeed a straightforward extension of \fsub.
In particular, erasing capture sets from well-typed \CC terms yields well-typed \fsub terms.

\begin{lemma}[Erasure]
Let $t$ be a \CC term such that $\ts t : T$ for some type $T$. Let $\lceil\;\cdot\;\rceil$ be a
function from \CC terms and types to \fsub terms and types that erases capture sets (and thereby all term dependencies).
Then we have that $\ts : \lceil t \rceil : \lceil T \rceil$.
\end{lemma}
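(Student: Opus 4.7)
The plan is to proceed by induction on the typing derivation $\Gamma \ts t : T$, after first fixing precise definitions and establishing auxiliary erasure lemmas for the other judgements.

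First I would fix the erasure $\lceil\,\cdot\,\rceil$ on types and pretypes: $\lceil C\;U\rceil = \lceil U\rceil$, $\lceil X\rceil = X$, $\lceil \top\rceil = \top$, $\lceil \TLAM{x:S} T\rceil = \lceil S\rceil \to \lceil T\rceil$, and $\lceil \Ttlam{X}{S}{T}\rceil = \Ttlam{X}{\lceil S\rceil}{\lceil T\rceil}$. On terms, $\lceil\,\cdot\,\rceil$ simply drops the capture-set annotations that appear within type ascriptions and leaves the term skeleton otherwise untouched. Contexts are erased pointwise. The crucial commutation property is that capture-set substitution disappears under erasure: $\lceil \subs{x}{C}{T} \rceil = \lceil T \rceil$ (and likewise for pretypes), because substitution only touches capture sets while erasure deletes them entirely.

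Next I would prove three auxiliary lemmas, each by routine rule induction. (i) Well-formedness: if $\Gamma \csemi A_+ \csemi A_- \ts T \wf$, then $\lceil \Gamma\rceil \ts \lceil T\rceil \wf$ in \fsub; the variance bookkeeping and the capture-set well-formedness rule contribute nothing after erasure. (ii) Subcapturing: any subcapturing judgement $\Gamma \ts C_1 \sub C_2$ is irrelevant after erasure and needs no counterpart. (iii) Subtyping: if $\Gamma \ts T_1 \sub T_2$ then $\lceil \Gamma\rceil \ts \lceil T_1\rceil \sub \lceil T_2\rceil$ in \fsub. The rule \ruleref{capt} erases to reflexivity (both sides erase to $\lceil U\rceil$ after a subtyping step on pretypes), while \ruleref{fun}, \ruleref{tfun}, \ruleref{top}, reflexivity, and transitivity erase directly to their \fsub counterparts; the term-dependent binder in \ruleref{fun} becomes an ordinary \fsub arrow with no binding.

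With these in hand, the main induction on the typing derivation goes through rule by rule. For \ruleref{abs} the \CC conclusion is of the form $\Gamma \ts \LAM{x:S} t : \fv(v)\;(\TLAM{x:S} T)$; this erases to $\lceil\Gamma\rceil \ts \lceil \LAM{x:S} t \rceil : \lceil S\rceil \to \lceil T\rceil$, which follows from the IH and the standard \fsub \textsc{abs} rule. The analogous \ruleref{t-abs} case is parallel. The two variable rules both erase to ordinary \fsub \textsc{var}. The key case is \ruleref{app}: the \CC conclusion types the application at $\subs{x}{\cv(S,\Gamma)}{T}$, whose erasure is $\lceil T\rceil$ by the commutation property above, so it matches the \fsub application rule applied to the IH on the function and argument. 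The \ruleref{t-app} case is analogous. Subsumption uses the subtyping erasure lemma.

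The main obstacle I anticipate is making sure the substitution-commutation property is stated and used carefully enough to handle \ruleref{app} and \ruleref{t-app}, together with the fact that in \ruleref{abs} the capture set $\fv(v)$ appearing in the conclusion really does disappear cleanly (so the erasure of the arrow loses no information that the IH would need). Everything else is bookkeeping: each \CC typing rule is either a direct refinement of an \fsub rule that becomes that rule after erasure, or it is about capture sets and vanishes altogether.
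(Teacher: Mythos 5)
Your proposal is correct and follows essentially the same route as the paper, which simply notes that the result is immediate by structural induction on the typing derivation. Your elaboration (the precise definition of erasure, the substitution-commutation property $\lceil \subs{x}{C}{T} \rceil = \lceil T \rceil$, and the auxiliary erasure lemmas for subtyping and well-formedness) is exactly the bookkeeping the paper leaves implicit.
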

\begin{proof}
Immediate from structural induction on the typing derivation of $\ts t : T$.
\end{proof}

\noindent Moreover, \fsub embeds naturally into \CC, simply by annotating \fsub function and type
abstraction types with either the empty or the universal capture set.
\begin{lemma}[Embedding]
  Let $C$ be either the empty or the universal capture set.
  Let $t$ be a \fsub term such that $\ts t : T$ for some type $T$.  Let $\lfloor\;\cdot\;\rfloor$ be a
  function from \fsub to \CC terms and types that annotates \fsub types of function and type abstractions with
  $C$. Then we have that $\ts \lfloor t \rfloor : \lfloor T \rfloor$.
\end{lemma}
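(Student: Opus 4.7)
The plan is to proceed by structural induction on the \fsub typing derivation of $\Gamma \ts t : T$, after extending $\lfloor\cdot\rfloor$ homomorphically to contexts. As a preliminary step I would establish the analogous embedding for subtyping, $\Gamma \ts S <: T$ in \fsub implies $\lfloor\Gamma\rfloor \ts \lfloor S\rfloor <: \lfloor T\rfloor$ in \CC, by a straightforward induction on \fsub subtyping derivations, lifting each step through \ruleref{capt} with the reflexive subcapturing $C \sub C$. This lemma is invoked in the subsumption case of the main induction.

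Most typing cases are routine. \fsub variable lookup splits into \ruleref{var} when the bound type is a pretype and \ruleref{tvar} when it is a type variable. \fsub application translates through \ruleref{app}; crucially, since \fsub types contain no term dependencies, the capture-set substitution introduced by \ruleref{app} acts vacuously on $\lfloor T\rfloor$, so the result matches. Type abstraction and type application are analogous.

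The interesting case is term abstraction. Rule \ruleref{abs} in \CC assigns the type $\fv(v)\;U$ where $U = \TLAM{x:\lfloor S\rfloor} \lfloor T\rfloor$, whereas the goal is $C\;U$, so I would close the gap via \ruleref{capt} after showing $\lfloor\Gamma\rfloor \ts \fv(v) \sub C$. For $C = \UC$ this is immediate because $\UC$ is universal; this is the easy half. The case $C = \emptycs$ is the main obstacle and hinges on the following auxiliary invariant of the translated context: for every term binding $x : T_x$ in $\lfloor\Gamma\rfloor$ we have $\cv(T_x, \lfloor\Gamma\rfloor) = \emptycs$. Indeed, if $T_x$ comes from a pretype then by construction it is wrapped with $\emptycs$, and if $T_x$ is a type variable $X$ then the upper bound of $X$ is itself $\emptycs$-wrapped, so $\cv(X, \lfloor\Gamma\rfloor)$ again reduces to $\emptycs$. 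Consequently every $y \in \fv(v)$ is pure, and repeated application of \ruleref{sc-var} yields $\fv(v) \sub \emptycs$.

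The remaining side-condition to check is the well-formedness premise of \ruleref{abs} and \ruleref{t-abs}. This is automatic: \fsub types contain no term variables, so every capture set appearing in $\lfloor T\rfloor$ is either $\emptycs$ or $\UC$, and the polarity conditions of Figure \ref{fig:well-formedness} are vacuously satisfied.
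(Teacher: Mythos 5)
Your proposal is correct and takes essentially the same approach as the paper: the paper's proof is a structural induction on the typing derivation whose key observation is precisely your purity invariant, namely that no matter whether $C$ is $\emptycs$ or $\UC$, every term variable in the translated context is a subcapture of $C$. You have simply spelled out the details (the auxiliary subtyping embedding, the vacuous capture-set substitution in the application case, and the vacuously satisfied well-formedness conditions) that the paper's one-line proof leaves implicit.
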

\begin{proof}
  Structural induction on the typing derivation, after observing that no matter what $C$ is,
  every term variable will be a subcapture of $C$.
\end{proof}

\noindent All of the following lemmas and theorems were mechanized in Coq.

\subsubsection*{Soundness} Our calculus satisfies the standard progress and
preservation lemmas.

\begin{theorem}[Progress]
\label{th:progress}
  If $\ts t : T$, then either $t$ is a value, or there exists a term $t'$ such that we can take a step $t \longrightarrow t'$.
\end{theorem}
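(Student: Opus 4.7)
The plan is to proceed by structural induction on the typing derivation $\vdash t : T$, following the standard recipe for progress in a call-by-value calculus, but carefully adapted to the type/pretype split of \CC.

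The easy cases come first. Any derivation ending in \ruleref{abs} or \ruleref{t-abs} immediately yields a value, so we are done. The two variable rules are vacuous under an empty context, since neither a term variable $x$ nor a binding $x:T$ can be looked up in $\emptyset$. A derivation ending in subsumption (the \ruleref{sub} rule) is handled by a direct appeal to the induction hypothesis on the premise, because reduction does not care about typing. So the real content lives in the application rules \ruleref{app} and its type-level counterpart.

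For \ruleref{app}, suppose $\vdash t_1\;t_2 : T$ with $\vdash t_1 : C\;(\TLAM{x:S} T')$ and $\vdash t_2 : S$. By the induction hypothesis, each $t_i$ is either a value or steps. If $t_1$ steps to $t_1'$, the evaluation context $E = \hole\;t_2$ gives us $t_1\;t_2 \longrightarrow t_1'\;t_2$; similarly if $t_1$ is a value but $t_2$ steps, we use $E = t_1\;\hole$. In the remaining subcase both are values, and we need to invoke a \emph{canonical forms} lemma to conclude that $t_1$ must syntactically be a lambda abstraction $\LAM{x:S''} t_0$, which then fires \ruleref{beta-v}. The type application case is entirely analogous, using a canonical forms lemma for universal types and the corresponding $\beta$-reduction rule.

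The main obstacle, and the place where \CC differs from textbook $\textsf{System F}_{<:}$, is establishing these canonical forms lemmas. A value $v$ typed at $C\;(\TLAM{x:S} T')$ might have been derived by \ruleref{sub} from some other type $C_0\;U_0$, so we need an inversion lemma for subtyping through the \ruleref{capt} rule: if $\vdash C_0\;U_0 \sub C\;(\TLAM{x:S} T')$, then $U_0$ is itself a function pretype $\TLAM{x:S_0} T_0'$ with appropriate variances on $S$ and $T'$. This is a standard transitivity-elimination/narrowing argument on subtyping derivations, complicated slightly by the presence of pretype rules \ruleref{fun}, \ruleref{tfun}, \ruleref{top} alongside type rules \ruleref{capt}, \ruleref{tvar}. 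Type variables cannot be introduced spuriously here because the context is empty, which rules out the \ruleref{tvar} case cleanly. Once canonical forms for both arrow and $\forall$-types are in hand, progress follows by assembling the cases above; I would expect the mechanized version to look very close to the Coq development's structure for $\textsf{System F}_{<:}$, with the canonical forms / subtyping-inversion lemmas being the only genuinely new work.
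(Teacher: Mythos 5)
Your proposal is correct and matches the structure of the paper's own proof, which is not spelled out in the text but is mechanized in Coq following the standard \fsub soundness development of \citet{aydemir2008engineering}: induction on the typing derivation, congruence steps via evaluation contexts, and canonical-forms lemmas obtained by inversion on subtyping (where the \ruleref{capt}/pretype split and the empty-context treatment of \ruleref{tvar} are handled exactly as you describe). Your identification of the subtyping-inversion/canonical-forms lemmas as the only genuinely new work relative to textbook \fsub is accurate.
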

\begin{theorem}[Preservation]
\label{th:preservation}
  If $\Gamma \ts t_1 : T$ and $t_1 \reduces t_2$, then we have that $\Gamma \ts t_2 : T$.
\end{theorem}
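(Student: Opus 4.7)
The plan is to proceed by induction on the reduction derivation $t_1 \reduces t_2$, appealing to inversion on the typing derivation of $t_1$ at each case. The infrastructure needed is standard in spirit but delicate in detail:
(i) inversion lemmas that strip uses of subsumption so that, up to a leading application of subtyping, every typing derivation ends in a syntax-directed rule; in particular an inversion for $\Gamma \ts \LAM{x:S}t_0 : T$ that yields some $\TLAM{x : S'} T_0$ with $S \sub S'$ and $T_0$ appropriately related to $T$;
(ii) a decomposition/replacement lemma for evaluation contexts stating that if $\Gamma \ts E[t] : T$, then $\Gamma \ts t : T'$ for some $T'$, and whenever $\Gamma \ts t' : T'$, also $\Gamma \ts E[t'] : T$;
(iii) the value substitution lemma the paper refers to as Lemma \ref{lem:subst_ee'}: if $\Gamma, x : S, \Gamma' \ts t : T$ and $\Gamma \ts v : S$ with $v$ a value, then $\Gamma, \subs{x}{\fv(v)}\Gamma' \ts \subs{x}{v}\subs{x}{\fv(v)}t : \subs{x}{\fv(v)}T$;
(iv) an analogous type substitution lemma for type application, plus routine weakening and narrowing lemmas for subcapturing, subtyping, and well-formedness.

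With these tools, the main induction is mostly mechanical. The congruence case follows from (ii) and the induction hypothesis on the head redex. For \ruleref{beta-v}, I invert the application typing to obtain $\Gamma \ts \LAM{x:S}t_0 : \TLAM{x:S}T_0$ (modulo subsumption) and $\Gamma \ts v : S'$ with $S' \sub S$, so that after one subsumption on $v$, (iii) gives $\Gamma \ts \subs{x}{v}\subs{x}{\fv(v)}t_0 : \subs{x}{\fv(v)}T_0$. The typing rule \ruleref{app} originally assigned the application the type $\subs{x}{\cv(S, \Gamma)}T_0$, which is a supertype of $\subs{x}{\fv(v)}T_0$ because of the subcapturing fact $\Gamma \ts \fv(v) \sub \cv(S, \Gamma)$ (obtained from the typing derivation of $v$) together with the well-formedness constraint that $x$ occur only covariantly in $T_0$. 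A final application of \ruleref{sub} restores the expected type. The type-beta case is handled by (iv) in the analogous way.

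The main obstacle is establishing (iii). Unlike in \fsub, substituting a value $v$ for $x$ in a type $T$ must also rewrite every capture set containing $x$ into one containing $\fv(v)$, and this rewrite must preserve subtyping and well-formedness even when the new set differs from the bound $\cv(S,\Gamma)$ declared for $x$. Preserving subtyping is exactly what the variance discipline in the well-formedness judgment $\Gamma \csemi A_+ \csemi A_- \ts T \wf$ was engineered to guarantee: it restricts $x$ to appear at polarities consistent with the direction in which $\fv(v)$ may widen or narrow with respect to the ascribed bound. The proof of (iii) therefore proceeds by a mutual induction on the derivations of typing, subtyping, subcapturing, and well-formedness, carefully tracking polarity at every rule; the cases for \ruleref{fun}, \ruleref{tfun}, and \ruleref{capt} are where the polarity bookkeeping concentrates, while the remaining cases are routine. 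Once (iii) is in hand, Theorem \ref{th:preservation} assembles from the ingredients above with essentially no further work.
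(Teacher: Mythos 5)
Your proposal is correct and follows essentially the same route as the paper's (Coq-mechanized) proof: induction on the reduction relation with an evaluation-context replacement lemma, inversion modulo subsumption, capture prediction for values (Lemma \ref{lemma:prediction}), a value-only term substitution lemma, and monotonicity of covariant capture-set substitution justified by the well-formedness variance discipline. The only difference is where the seam is placed: the paper's Lemma \ref{lem:subst_ee'} bakes the monotonicity step into its conclusion (yielding $\subs{x}{\cv(S,\Gamma)}{T}$ directly, under a well-formedness hypothesis), whereas you state the exact $\subs{x}{\fv(v)}{T}$ version and apply monotonicity plus \ruleref{sub} separately in the \ruleref{beta-v} case --- the same ingredients, assembled equivalently.
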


\subsubsection*{Meaning of capture sets} We observe that the capture set of a
value's type matches the capture set of that value's free variables:

\begin{lemma}[Capture Prediction for Values]
  \label{lemma:prediction}
  If $\Gamma \ts v : T$, then $\Gamma \ts \fv(v) \sub \cv(T, \Gamma)$.
\end{lemma}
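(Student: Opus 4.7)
The plan is to proceed by structural induction on the typing derivation of $\Gamma \ts v : T$. Since $v$ is syntactically a value—either a term abstraction $\LAM{x : S} t$ or a type abstraction $\tlam{X}{S}{t}$—the last rule of the derivation is forced to be either \ruleref{abs}, \ruleref{t-abs}, or a subsumption step wrapping such a derivation. No other typing rule produces a judgement whose subject is a value.

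For the two value-introduction cases, the conclusion is immediate. Rules \ruleref{abs} and \ruleref{t-abs} assign $v$ a type of the form $\fv(v) \; U$. By the definition of $\cv$ in Section \ref{preliminaries}, $\cv(\fv(v) \; U, \Gamma) = \fv(v)$, so the goal $\Gamma \ts \fv(v) \sub \fv(v)$ holds by reflexivity of subcapturing, which is admissible from the rules in Figure \ref{fig:subcapturing}.

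The remaining case is subsumption: if the last step uses $\Gamma \ts v : T'$ together with $\Gamma \ts T' \sub T$, then the inductive hypothesis yields $\Gamma \ts \fv(v) \sub \cv(T', \Gamma)$. To bridge this to the desired conclusion, I rely on an auxiliary monotonicity lemma: $\Gamma \ts T_1 \sub T_2$ implies $\Gamma \ts \cv(T_1, \Gamma) \sub \cv(T_2, \Gamma)$. Combined with (admissible) transitivity of subcapturing, this closes the case.

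The main obstacle is proving this auxiliary lemma, which itself proceeds by induction on the subtyping derivation. The central case is \ruleref{capt}, which decomposes $\Gamma \ts C_1 \; U_1 \sub C_2 \; U_2$ into a subcapturing premise $\Gamma \ts C_1 \sub C_2$; this matches the goal directly since $\cv(C_i \; U_i, \Gamma) = C_i$. The type-variable case \ruleref{tvar} chases the bound of $X$ through $\Gamma$, and the conclusion follows from the definition of $\cv$ on type variables together with \ruleref{sc-var}. The pretype-level rules \ruleref{fun}, \ruleref{tfun}, and \ruleref{top} never arise in this sub-induction because $\cv$ is only defined on complete types, not pretypes.
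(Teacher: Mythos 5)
Your proof is correct and follows the paper's overall skeleton---induction on the typing derivation, with \ruleref{abs} and \ruleref{t-abs} as base cases where $T = \Capt{\fv(v)}{U}$ and reflexivity of subcapturing closes the goal---but you discharge the subsumption case by a genuinely different route. The paper argues by shape inversion: since $v$ is a value, its type $T'$ must have the form $\Capt{C'}{U'}$, hence the supertype $T$ must have the form $\Capt{C}{U}$ with $\Gamma \ts C' \sub C$, and the result follows from the induction hypothesis. You instead isolate a monotonicity lemma, that $\Gamma \ts T_1 \sub T_2$ implies $\Gamma \ts \cv(T_1, \Gamma) \sub \cv(T_2, \Gamma)$, proved by induction on the subtyping derivation, and compose it with the induction hypothesis via admissible transitivity of subcapturing. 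The mathematical content is nearly the same---both arguments must ultimately analyze type-level subtyping through \ruleref{capt}, \ruleref{tvar}, and transitivity---but your lemma is strictly more general (it says something when the subtype is a type variable, not only a capt-form) and more modular, at the price of a slightly heavier auxiliary induction; the paper's inversion argument is more elementary but leans on the unproved (though true) facts that value types are capt-forms and that supertypes of capt-forms are capt-forms. Two small inaccuracies in your write-up, neither fatal: in the \ruleref{tvar} case of the auxiliary lemma, \ruleref{sc-var} is not the relevant rule (it governs \emph{term} variables occurring in capture sets); what you need is just the definition of $\cv$ on type variables, which chases the bound of $X$ in $\Gamma$, together with reflexivity or the inner induction hypothesis. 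And your sub-induction should also name the explicit reflexivity and transitivity rules of the type-level subtyping judgement (the paper duplicates them for each judgement); both cases are immediate from admissible reflexivity and transitivity of subcapturing.
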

\begin{proof}
  Induction on the typing derivation $\Gamma \ts v : T$.
  Now, as $v$ is a value,
  the base case is either an application of the typing rule \ruleref{abs} or \ruleref{t-abs},
  and hence $T = \Capt{\fv(v)}{U}$ for some pretype $U$, as desired.
  Inductively, we have an application of the typing rule \ruleref{sub}.  Hence $\Gamma \ts v : T'$,
  $T' \sub T$, and $\Gamma \ts \fv(v) \sub \cv(T', \Gamma)$.  Now, as $v$ is a value, $T' = \Capt{C'}{U'}$
  for some capture set $C'$ and pretype $U'$, and hence $T = \Capt{C}{U}$ for some capture set $C$ and pretype $U$.
  Hence $\fv(v) \sub \cv(T', \Gamma) = C' \sub C = \cv(T, \Gamma)$, as desired.
\end{proof}

Note that $\fv(v)$ and $\cv(T, \Gamma)$ do not need to be subsets - they need only be in a
subcapturing relationship; for example, consider a value $v = \lam{x}{\Capt{\UC}{\top}}{y}$ in
an environment $\Gamma = (x : \Capt{\emptycs}{\top})$.  Here we may assign $v$ the type
$T = \Capt{\emptycs}{(\Capt{\UC}{\top} \to \top)}$ by subsuming away the capture set for $x$,
but we also have that $\Gamma \ts \fv(v) = \{x\} \sub \{\} = \cv(T, \Gamma)$.

The following corollary captures the essence of \CC. From preservation and
capture prediction for values, it follows that our calculus accurately tracks
the free variables (\ie captured) of the value a term reduces to.

\begin{corollary}[Capture Prediction for Terms]
  \label{lemma:predict-term}
  Let $\Gamma$ be an environment with only term variables.  If $\Gamma \ts t : T$ and $t \longrightarrow^* v$, then $\Gamma \ts \fv(v) \sub \cv(T, \Gamma)$.
\end{corollary}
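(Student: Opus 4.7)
The plan is to combine the two preceding results, \textbf{Preservation} (Theorem~\ref{th:preservation}) and \textbf{Capture Prediction for Values} (Lemma~\ref{lemma:prediction}), via a straightforward induction on the length of the reduction sequence $t \longrightarrow^* v$. The intuition is simple: preservation lets us transport the typing $\Gamma \ts t : T$ along every reduction step to obtain $\Gamma \ts v : T$, and then the value-level capture prediction lemma lets us read off $\Gamma \ts \fv(v) \sub \cv(T,\Gamma)$.

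Concretely, I would proceed by induction on the number $n$ of reduction steps in $t \longrightarrow^* v$. For the base case $n = 0$, we have $t = v$, so the hypothesis is exactly $\Gamma \ts v : T$, and Lemma~\ref{lemma:prediction} immediately yields the desired $\Gamma \ts \fv(v) \sub \cv(T, \Gamma)$. For the inductive step, suppose $t \longrightarrow t' \longrightarrow^* v$ in $n+1$ steps. Preservation applied to $\Gamma \ts t : T$ and $t \longrightarrow t'$ gives $\Gamma \ts t' : T$; then the inductive hypothesis on the $n$-step reduction $t' \longrightarrow^* v$ yields $\Gamma \ts \fv(v) \sub \cv(T, \Gamma)$, as required. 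Note that the type $T$ is preserved identically (not merely up to subtyping) across each step, which is what makes the argument so direct.

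The hypothesis that $\Gamma$ contains only term variables is not needed to make the inductive argument work -- Preservation is stated for arbitrary $\Gamma$ and Lemma~\ref{lemma:prediction} likewise -- but is natural here because the reduction relation acts on closed terms modulo a term-variable environment, and restricting $\Gamma$ this way ensures that $\fv(v)$ and $\cv(T, \Gamma)$ are interpreted only in terms of bindings that reduction cannot touch. There is essentially no obstacle beyond invoking the two lemmas in the right order; the only mild subtlety is observing that Preservation gives back \emph{the same} type $T$ rather than a subtype, so the capture-set bound $\cv(T,\Gamma)$ on the right of the subcapturing relation remains unchanged throughout the induction and matches the statement we are trying to prove.
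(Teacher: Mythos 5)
Your proof is correct and follows exactly the route the paper intends: the paper justifies the corollary by remarking that it ``follows from preservation and capture prediction for values,'' which is precisely your induction on the length of the reduction sequence, transporting the typing $\Gamma \ts t : T$ step by step via Theorem~\ref{th:preservation} and concluding with Lemma~\ref{lemma:prediction}. Your side observation that the restriction to term-variable-only environments is not strictly needed by this argument (given how Preservation and Lemma~\ref{lemma:prediction} are stated) is also accurate.
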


While the corollary appears deceptively simple, it has important consequences.
In a setting with capabilities, the capture set of a term accurately reflects
the capabilities retained by the value it reduces to.

\subsubsection*{Substitution Lemmas} Due to the term dependency in \CC, we needed to prove a few nonstandard
substitution lemmas for progress and preservation.  This is apparent when comparing the typing rule for term
application with the reduction rule for term application; term substitution proceeds with the exact
capture set of the value -- the free variables of the value, but the typing rule proceeds with a capture
set that subcaptures the free variables of the value.  This necessitates the following lemma, linking
these two capture sets.

\begin{lemma}[Term substitution preserves typing]
  \label{lem:subst_ee'}
  \hfill\\
  \indent If $\Gamma \ccomma x: S \ts t \typ T$ and
  $\Gamma \ccomma x: S \csemi \cset{x} \cup \dom(\Gamma) \csemi \dom(\Gamma) \ts T \wf$,
  then for all $v$ such that $\Gamma \ts v \typ T$,
  we have:
  \[\Gamma \ts \subs{x}{v}{\subs{x}{\fv(v)}t} \typ \subs{x}{\cv(S, \Gamma)}{T}\]
\end{lemma}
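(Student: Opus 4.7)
The plan is to proceed by structural induction on the typing derivation $\Gamma, x : S \ts t : T$, using Lemma \ref{lemma:prediction} to supply the crucial fact $\Gamma \ts \fv(v) \sub \cv(S, \Gamma)$. The asymmetry between the two capture-set substitutions in the conclusion --- $\fv(v)$ replacing $x$ inside the term but $\cv(S, \Gamma)$ replacing $x$ inside the type --- precisely mirrors the gap between the reduction rule \ruleref{beta-v}, which substitutes $\fv(v)$, and the typing rule \ruleref{app}, which substitutes $\cv(S, \Gamma)$. Closing this gap is exactly the role of the well-formedness hypothesis.

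Before attempting the main induction, I would establish a few auxiliary lemmas about capture-set substitution. First, that subcapturing and subtyping are monotone under capture-set substitution, which handles the \ruleref{sub} case. Second, that well-formedness is preserved under substitution, so that the side conditions on \ruleref{abs} and \ruleref{t-abs} can be re-established in the reassembled derivations. Third, and most critically, a \emph{covariance monotonicity} lemma: under the well-formedness hypothesis that $x$ occurs only covariantly in $T$ (as guaranteed by $A_+ = \cset{x} \cup \dom(\Gamma)$ and $A_- = \dom(\Gamma)$), if $\Gamma \ts C_1 \sub C_2$ then $\Gamma \ts \subs{x}{C_1}{T} \sub \subs{x}{C_2}{T}$. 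Applied to $C_1 = \fv(v)$ and $C_2 = \cv(S, \Gamma)$, this lemma is what ultimately allows \ruleref{sub} to bridge the asymmetry at the top level of the induction.

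In the main induction, the characteristic base case is $t = x$: after substitution the term becomes $v$, and a \ruleref{var} derivation gives type $\cset{x}\,U$ for $x$ with $S = C_0\,U$. Since $x \notin \dom(\Gamma)$, the pretype $U$ mentions no $x$, so $\subs{x}{\cv(S, \Gamma)}{(\cset{x}\,U)} = C_0\,U = S$, which is the type we already have for $v$ by hypothesis. The other typing rules --- \ruleref{abs}, \ruleref{t-abs}, \ruleref{app}, \ruleref{t-app}, \ruleref{sub}, and the \ruleref{tvar} variable case --- each proceed by applying the induction hypothesis to subterms, carefully commuting the outer substitution $\subs{x}{\cdot}$ with the inner parameter substitution introduced by \ruleref{app} (which requires a secondary equation $\cv(\subs{x}{\cv(S, \Gamma)}{T_2}, \Gamma) = \subs{x}{\cv(S, \Gamma)}{\cv(T_2, \Gamma, x:S)}$), and reassembling via the same rule using the auxiliary lemmas for side conditions.

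The main obstacle is proving the covariance monotonicity auxiliary lemma. This requires simultaneous induction on the subtyping and well-formedness derivations, with careful polarity bookkeeping, and a strengthened statement that also accounts for how $A_+$ and $A_-$ swap across function and type-function arrows. It is precisely here that the well-formedness hypothesis is indispensable: without the covariance restriction on occurrences of $x$ in $T$, replacing $\cv(S, \Gamma)$ by the smaller $\fv(v)$ at a contravariant position would produce a supertype rather than a subtype, which is exactly the unsoundness illustrated by the function $\id{f}$ in Section \ref{well-formedness}.
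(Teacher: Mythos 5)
Your proposal is correct and follows essentially the same route as the paper: the paper likewise first obtains $\Gamma \ts \subs{x}{v}{\subs{x}{\fv(v)}t} \typ \subs{x}{\fv(v)}{T}$ by induction on the typing derivation, then bridges the asymmetry using capture prediction ($\Gamma \ts \fv(v) \sub \cv(S, \Gamma)$, Lemma \ref{lemma:prediction}) together with exactly your covariance monotonicity lemma --- stated in the paper as ``Monotonicity of covariant capture set substitution'' --- and a final application of \ruleref{sub}. Your identification of the well-formedness hypothesis as what licenses that monotonicity step, and of the contravariant-occurrence counterexample as what fails without it, matches the paper's account precisely.
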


\noindent Without the well-formedness condition, we would only be able to show that:

\[\Gamma \ts \subs{x}{v}{\subs{x}{\fv(v)}t} \typ \subs{x}{\fv(v)}{T}\]

\noindent Now, as $\Gamma \ts \fv(v) \sub \cv(S, \Gamma)$, and as $x$ does not occur
contravariantly in $T$ due to our well-formedness constraints,
we have that $\Gamma \ts \subs{x}{\fv(v)}{T} \sub \subs{x}{\cv(S, \Gamma)}{T}$.
Formally, this is stated below in the following lemma, which is needed to prove Lemma \ref{lem:subst_ee'}:

\begin{lemma}[Monotonicity of covariant capture set substitution]
  \hfill\\
  \indent If $\Gamma,x:S \csemi {x} \cup \dom(\Gamma) \csemi \dom(\Gamma) \ts T \wf$,
  then for all $C_1, C_2$ such that $\Gamma \ts C_1 \sub C_2$,
  we have:
  $$\Gamma \ts \subs{x}{C_1}T \sub \subs{x}{C_2}T$$
\end{lemma}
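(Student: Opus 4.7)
The plan is to proceed by induction on the well-formedness derivation of $T$, using the polarity tracking in the judgement to discharge every contravariant occurrence trivially and the inductive hypothesis to propagate the desired monotonicity through every covariant occurrence.

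First I would prove a small auxiliary fact at the level of capture sets: if $\Gamma \ts C_1 \sub C_2$, then for any capture set $D$, $\Gamma \ts \subs{x}{C_1}D \sub \subs{x}{C_2}D$. This is a three-way case split on whether $x \in D$ and whether $D = \UC$. If $x \notin D$, both substitutions act as the identity and we conclude by reflexivity of subcapturing. If $D = \UC$, both sides evaluate to $\UC$ by the conventions on $\cup$ and $\setminus$ with $\UC$. Otherwise $\subs{x}{C_i}D = (D \setminus \{x\}) \cup C_i$, and we need that adjoining $D \setminus \{x\}$ to both sides of a subcapturing preserves it, which is a routine monotonicity-of-union fact admissible from the rules in Figure \ref{fig:subcapturing} (one derives $\Gamma \ts (D \setminus \{x\}) \cup C_1 \sub (D \setminus \{x\}) \cup C_2$ from \ruleref{sc-var}/\ruleref{sc-set}-style rules applied pointwise).

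With that in hand, the main induction goes on the structure of $T$, driven by the WF derivation. For $T = X$ a type variable, substitution is the identity and we conclude by reflexivity. For $T = D\,U$ (the case introduced by \ruleref{capt-wf}), I apply \ruleref{capt} to reduce the goal to two subgoals: a subcapturing between $\subs{x}{C_1}D$ and $\subs{x}{C_2}D$, discharged by the auxiliary fact, and a pretype-subtyping between $\subs{x}{C_1}U$ and $\subs{x}{C_2}U$, handled by subcases on $U$. If $U = \top$, both sides equal $\top$. If $U = \TLAM{y:T_1}T_2$ (the case from \ruleref{fun-wf}), the WF premise for $T_1$ is checked with the polarities swapped; since $x \notin A_-$ in the outer judgement, $x$ does not occur in any capture set of $T_1$, so $\subs{x}{C_1}T_1 = \subs{x}{C_2}T_1 = T_1$. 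For $T_2$ the polarities are preserved and $y$ is added to the environment, so the inductive hypothesis gives $\Gamma, y:T_1 \ts \subs{x}{C_1}T_2 \sub \subs{x}{C_2}T_2$, and \ruleref{fun} combines these into the desired pretype subtyping. The \ruleref{tfun-wf} case for $\Ttlam{Y}{T_1}{T_2}$ is analogous, using \ruleref{tfun}.

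The main obstacle is the bookkeeping around the polarity sets $A_+, A_-$ under function binders: making precise that ``$x$ does not occur contravariantly in $T$'' implies that $x$ does not occur at all in any contravariant subcomponent, so that substitution there is literally a no-op and reflexivity suffices. This requires a supporting lemma stating that if $\Gamma \csemi A_+ \csemi A_- \ts T \wf$ and $y \notin A_+ \cup A_-$, then $y$ does not appear in any capture set of $T$ — an immediate structural induction on the WF derivation, but it is the step that makes the contravariant cases go through and is what ultimately justifies Lemma \ref{lem:subst_ee'}. All the other arithmetic on capture sets then plugs in cleanly.
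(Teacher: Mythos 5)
There is a genuine gap, and it sits exactly at the step you identify as the ``main obstacle'': the claim that, because $x \notin A_-$ in the outer judgement, $x$ cannot occur in any capture set of the argument type $T_1$. The rule \ruleref{fun-wf} checks $T_1$ with the polarity sets \emph{swapped}, so its premise has the form $\Gamma \csemi A_- \csemi A_+ \ts T_1 \wf$. Under the lemma's hypothesis $x \in A_+$, so after the swap $x$ still belongs to the \emph{contravariant-allowed} set of that premise; it is therefore perfectly legal for $x$ to occur inside $T_1$, namely at positions that are contravariant within $T_1$ (and hence covariant within $T$, after the double flip). A concrete counterexample to your claim: the type $T = \cset{}\,\TLAM{y : T_1} \capt{x}\top$ with $T_1 = \cset{}\,\TLAM{z : \capt{x}\top}\top$ is well-formed under $\Gamma, x\!:\!S \csemi \cset{x}\cup\dom(\Gamma) \csemi \dom(\Gamma)$, yet $x$ occurs inside $T_1$, and $\subs{x}{C_i}{T_1} = \cset{}\,\TLAM{z : \Capt{C_i}{\top}}\top \neq T_1$. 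Your supporting lemma (``$y \notin A_+ \cup A_-$ implies $y$ absent from $T$'') is true, but it is inapplicable here precisely because $x$ remains in one of the two sets after the swap. So in the \ruleref{fun-wf} case your argument collapses to reflexivity where reflexivity is not available, and the induction does not go through.

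The fix is to strengthen the statement to a simultaneous, polarity-indexed induction: (i) if $x$ is allowed only covariantly ($x \in A_+$, $x \notin A_-$) then $\Gamma \ts \subs{x}{C_1}{T} \sub \subs{x}{C_2}{T}$, and (ii) if $x$ is allowed only contravariantly ($x \in A_-$, $x \notin A_+$) then the inequality flips: $\Gamma \ts \subs{x}{C_2}{T} \sub \subs{x}{C_1}{T}$. In the function case of (i) you apply (ii) to $T_1$ --- yielding $\subs{x}{C_2}{T_1} \sub \subs{x}{C_1}{T_1}$, which is exactly what the contravariant premise of \ruleref{fun} demands --- and (i) to $T_2$; the cases of (ii) are mirror images, and \ruleref{tfun-wf} needs the same treatment for the bound. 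In the counterexample above this is visibly necessary: proving $\subs{x}{C_1}{T} \sub \subs{x}{C_2}{T}$ reduces, after two applications of \ruleref{fun}, to the covariant fact $\Capt{C_1}{\top} \sub \Capt{C_2}{\top}$, which only the antitone branch of the induction can route to. Your auxiliary capture-set fact and the remaining cases (type variables, $\top$, \ruleref{capt-wf}) are fine and slot into this strengthened induction unchanged, modulo the routine weakening of $\Gamma \ts C_1 \sub C_2$ under the binders introduced by \ruleref{fun} and \ruleref{tfun}.
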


\hypertarget{mechanization}{%
\subsection{Mechanization}\label{mechanization}}
We mechanized \CC using the Coq theorem prover \cite{Coq} \cite{yves04interactive}. In
addition, we wrote a simple typechecker for our terms and used it to verify that
the examples we used in our case studies typecheck correctly. We have also
verified the correctness of this typechecker by proving in Coq that the terms
for $\operatorname{nil}$ and $\operatorname{cons}$ typecheck with the types
given by our simple typechecker.

As our calculus is an extension of \fsub, augmented with sets of free variables
meant to track capture, we based our Coq implementation on the locally nameless
proof of \fsub by \citet{aydemir2008engineering}. In particular, since our types
can mention term variables, we chose the locally nameless approach to avoid
problems with alpha-equivalence of types. We attempted to stay as close as
possible to the original proof of soundness of \fsub. We highlight some of the
details below.

\subsubsection*{Formalizing Capture Sets}
Capture sets in \CC are formalized as an inductive data type with two constructors representing universal
capture sets $\UC$ or concrete capture sets, correspondingly. Due to the locally nameless approach,
a concrete capture set is represented by two sets to model free variables using names and bound
variables using de Bruijn indices.
This worked well for the most part, but we encountered some
difficulties when dealing with sets, as we often had sets that were equal
propositionally, but not definitionally -- for example, $\{1, 2, 3\}$ instead of $\{1\} \cup \{2\} \cup \{3\}$.

\subsubsection*{Formalizing Well-formedness}
As our calculus is dependently typed with respect to capture sets, we need to enforce variance constraints
on where term variables can be bound in a type, as noted in Section \ref{well-formedness}.  Our well-formedness
judgement needs to keep track of two sets of term variables $A_+$ and $A_-$, which describe the
variables in covariant position relative to the current location in the type, and contravariant position respectively.
We modelled this in Coq by defining our inductive well-formedness proposition over a triple $(\Gamma, A_+, A_-)$,
where $\Gamma$ is the classical binding environment, carried over from the \fsub proof, and $A_+$ and $A_-$
are two sets of names.  We found this approach worked well
for describing the modified well-formedness lemmas and also allowed us to prove the necessary weakening and
narrowing lemmas for the overall soundness proof.
In particular, using sets as opposed to lists in the well-formedness judgment allowed us to avoid mechanizing a proof
that well-formedness is preserved under permuting the sets of term variables.
A downside of this representation of well-formedness was that the large number of constraints imposed by
well-formedness conditions made it difficult to formalize example typing judgments.

\section{Language Extensions}
\label{sec:extensions}
The calculus we have presented in the previous section assigns no particular
meaning to capture sets - it merely {\it tracks} the free variables without giving
them any concrete semantics.

This is fully intentional - we believe variable tracking to be a widely
applicable idea and as such, we did not want to privilege any single application
above others by adding it to the base calculus. Instead, in this section we show
how the core calculus can be extended with different semantics for free variables,
and how its metatheory can be used to reason about the extensions.

\subsection{Data Structures in \CC - $\id{List}$\label{sec:list}}
To give some intuition for the calculus, we work out the type signatures of different versions of the $\id{map}$ function, which
maps an arbitrary function argument over a strict list of pure values.  Below, we illustrate type signatures for
the standard $\id{map}$ function and a variant $\id{pureMap}$, which only maps a function that is pure.  $\id{pureMap}$
is of interest in many contexts; for example, one may wish to map a function that possesses no capabilities for performing
side-effects, in order to safely parallelize the map.

We can encode $\id{List}$ in \CC using the standard right-fold
B{\"o}hm-Berarducci encoding \cite{bohmAutomaticSynthesisTyped1985}; we give
terms and typings in Appendix A. All lists are annotated with the empty capture set.
Here is an example type signature for $\id{map}$:
\[\small
\begin{array}[t]{l@{\hspace{2mm}}l@{\hspace{1mm}}l}
\id{map}: & &\{\}\;\forall[A] \\
&\rightarrow &\{\}\;\forall[B] \\
&\rightarrow &\{\}\;\forall(\id{xs}: \id{List}[A]) \\
&\rightarrow &\{\}\;\forall(f: \Capt{\UC}{\Tlam{a}{A}{B}}) \\
&\rightarrow &\id{List}[B]
\end{array}
\]
We use here $\forall[X]$ as an abbreviation for $\forall[X \sub \Capt{\emptycs}{\Top}]$.
The function argument to $\id{map}$ may capture arbitrary capabilities.
However, as $\id{map}$ is strict, that capability is not retained in the result type.
If the list and function arguments are
reversed, the signature of $\id{map2}$ is as follows:

\[\small
\begin{array}[t]{l@{\hspace{2mm}}l@{\hspace{1mm}}l}
  \id{map2}: & &\{\}\;\forall[A] \\
  &\rightarrow &\{\}\;\forall[B] \\
  &\rightarrow &\{\}\;\forall(f: \Capt{\UC}{\Tlam{a}{A}{B}}) \\
  &\rightarrow &\capt{f}{\forall(\id{xs}: \id{List}[A])} \\
  &\rightarrow &\id{List}[B]
\end{array}
\]

\noindent Now, there is an additional capture set $\{f\}$, which reflects the fact that
$\id{map2}(f)$ is a partial application that captures $f$. Finally, here's the signature of $\id{pureMap}$;
recall that $\id{pureMap}$ accepts a function $f$ that must be pure:
\[\small
\begin{array}[t]{l@{\hspace{2mm}}l@{\hspace{1mm}}l}
  \id{pureMap}: & &\{\}\;\forall[A] \\
  &\rightarrow &\{\}\;\forall[B] \\
  &\rightarrow &\{\}\;\forall(\id{xs}: \id{List}[A]) \\
  &\rightarrow &\{\}\;\forall(f: \Capt{\emptycs}{\Tlam{a}{A}{B}}) \\
  &\rightarrow &\id{List}[B]
\end{array}
\]

\noindent  Here, $f$ can only be instantiated with functions that may only capture pure values.

\subsubsection*{Conclusion}
\CC is expressive enough that we can embed {\tt List} into it and assign
accurate types to functions operating on lists. We can express a
capture-polymorphic {\tt map} function, as well as one that only accepts
functions that have captured no free variables. In a setting where side effects are
mediated through capabilities tracked with capture sets, the latter function can
be useful when implementing a parallel {\tt map} function.

One limitation with this encoding in \CC is that {\tt List} can only contain
pure elements. One could specialize the list datatype and the type
variables $A$ and $B$ to work with some fixed, given capture set, up to and
including the universal capture set; however this causes some loss of precision.
In a nutshell, \CC models capture polymorphic operations well, but does not
model capture polymorphic data types as well. We aim to resolve this situation
in a future extension of \CC.

\hypertarget{non-local-returns}{%
\subsection{Non-Local Returns}\label{non-local-returns}}
We now study the applicability of \CC to perform simple effect
checking. The principal idea is that instead of extending the language with
an effect system, we represent the ability to perform an effect with a
capability. If we can guarantee that a capability cannot leave a particular scope,
this model scales to handling exceptions (or effect handlers as we will see in Section \ref{general-algebraic-effects}).
To illustrate the general idea, we start by modeling a language feature, which
is slightly simpler than exceptions: non-local returns. Performing a non-local
return allows us to transfer the control flow to the end of a particular block,
without necessarily being within the lexical scope of that block.
The extension is defined in Figure \ref{fig:return}.

\boxfig{fig:return}{Extending \CC with support for non-local returns.}{
  \judgement{Syntax}{}

  \begin{center}
    \[
    \begin{array}[t]{l@{\hspace{10mm}}l}

      U ::= \dots & \lindent{\bbox{Pretypes}} \\
      \gap \Return[T] & \mbox{return capability} \\
      v ::= \dots & \lindent{\bbox{Values}} \\
      \gap x & \mbox{variables} \\

      t ::= \dots & \lindent{\bbox{Terms}}\\
      \gap \HANDLE x: T \IN t & \mbox{return-able block} \\
      \gap \RETURN{t} s & \mbox{explicit return} \\
      \E ::= \dots & \lindent{\bbox{Evaluation context}} \\
      \gap \HANDLE x: T \IN \E & \\
      \gap \RETURN{\E} t & \\
      \gap \RETURN{x} \E & \\

    \end{array}
    \]
  \end{center}

  \judgement{Reduction}{\fbox{$t \reduces t$}}

  \infax[\ruledefN{beta-return}{beta-return}]{%
    \HANDLE x: T \IN v \reduces v }

  \infax[\ruledefN{context-return}{context-return}]{%
    \HANDLE x: T \IN \E[\RETURN{x} v] \reduces v }

  \judgement{Type assignment}{\fbox{$\Gamma \ts t \typ T$}}

  \infrule[\ruledef{return}]{%
    \Gamma \ccomma x: \UC \; \Return[T] \ts t \typ T
    \gap
    \Gamma, x: \UC \; \Return[T] \nts \cset{x} \sub \cv(T, \Gamma)
  }{%
    \Gamma \ts \HANDLE x: T \IN t \typ T }

  \infrule[\ruledef{do-return}]{%
    \Gamma \ts t \typ \CAPT{C} \Return[T]
    \gap
    \Gamma \ts s \typ T
  }{%
    \Gamma \ts \RETURN{t} s \typ S }
}

\subsubsection*{Operational semantics}

We introduce two new reduction rules and three new evaluation contexts. The latter two of
the three new contexts are standard, but let us pay closer attention to first
one, which mentions $\kw{handle}$. Here, we allow reducing {\it under} a binder for
the return capability. There are two ways for reduction to remove the binder -
either by reducing to a value and applying rule \ruleref{beta-return}, which
corresponds to normally returning from a block; alternatively, the term inside
the block can invoke the return capability and explicitly return from it, which
corresponds to the \ruleref{context-return} reduction rule. Note that if a term
tried to invoke the capability after we have removed the binder from the
evaluation context, the term would be stuck.

\subsubsection*{Soundness}

In order for the semantics of our extension to be sound, the capability to
return from a block should not outlive the block itself. There are two ways it
could do so: either by being returned from it normally (with rule
\ruleref{beta-return}), or by being returned from it explicitly (with rule
\ruleref{context-return}). We prevent both with the non-derivation subcapturing precondition
in rule \ruleref{return}. To see the precise reason why, consider the following.
If returning a value $v$ of type $T$ could leak the capability $x$, then
$x \in \fv(v)$. Then by Lemma
(\ref{lemma:prediction}) and by inspecting the subcapturing rules, it
follows that $\cset{x} \sub \cv(T, \Gamma)$. However, this is forbidden
by \ruleref{return}; it is not possible for a
capability to return from a block to outlive the block itself.

\subsubsection*{Example}

To demonstrate non-local returns, we present a small program that sums up the
square roots of a list of numbers, returning $\id{NaN}$ if one of the numbers is
negative.

\newcommand{\squareRoot}{\id{sqrt}}
\newcommand{\NaN}{\id{NaN}}
\newcommand{\ret}{\local{ret}}
\newcommand{\retCap}{\capability{r}}

{\small
\begin{align*}
&\id{root} : \Double \rightarrow (\UC \; \Double \rightarrow \Double) \rightarrow \Double\\
&\id{root} = \LAM{x} \LAM{\ret}\\
&\qquad \IF x < 0 \THEN{} \ret \NaN \ELSE \squareRoot x\\
&\\
&\id{sumRoots} : \List{\Double} \rightarrow (\UC \; \Double \rightarrow \Double) \rightarrow \Double\\
&\id{sumRoots} \; (x :: xs) = \id{root} x \ret \;+\; \id{sumRoots} xs \ret\\
&\id{sumRoots} \; [] = 0.0\\
&\\
&\HANDLE \retCap: \Double \IN\\
&\qquad \id{sumRoots} \; [1.0, 2.0, 3.0, -1.0] \; (\LAM{x} \RETURN{\retCap} x)
\end{align*}
}

\noindent The program is partitioned into three parts.
Firstly, the $\id{root}$ function takes the square root of
  its argument. If the argument is negative it signals this fact by
  invoking the passed $\ret$ function with the special value $\NaN$ as argument.
Secondly, the $\id{sumRoots}$ function applies $\id{root}$ to each element of a
  list and sums up the results. It simply passes the $\ret$ function to $\id{root}$.
Thirdly, the $\kw{handle}$ expression introduces the $\retCap$ capability. It
  further creates a function that captures the $\retCap$ capability and passes it
  to $\id{sumRoots}$.
The example shows how non-local returns allow transferring the control to a surrounding
handler. Note how the call to $\ret$ in function $\id{root}$ is not in the lexical scope of
the handler that introduces $\retCap$.

The program typechecks since it can be shown that the capability $\retCap$ is not captured by the result of
application of $\id{sumRoots}$. On
the other hand, the following variation gives a type error:

{\small
\begin{align*}
&\HANDLE \retCap: \Unit \rightarrow \Double \IN\\
&\qquad \LAM{} \id{sumRoots} \; [1.0, 2.0, 3.0, -1.0] \; (\LAM{x} \RETURN{\retCap} (\LAM{} x))
\end{align*}
}

\noindent Here, by rule \ruleref{abs}, $\retCap$ does appear in the
capture set of the handler's body, which violates the requirement for
\ruleref{return}.

\subsubsection*{Conclusion} The type system of \CC can indeed support the
notion of non-escaping variables, which we have used in this extension to model
blocks that safely allow non-local returns. We have also seen that functions can
be naturally used in our system to mediate access to capabilities. If our
extensions allowed capability-based exceptions, we would be able to call $\id{sumRoots}$ with
an exception-throwing $\ret$ without any changes to the function's definition.

\hypertarget{region-extension}{
\subsection{Regions}\label{region-extension}}
\boxfig{fig:region-extension}{Extending CC$_{<:}$: with support for region-based memory-management.}{
  \judgement{Syntax}{}
  \begin{center}
  \[
  \begin{array}[t]{l@{\hspace{10mm}}l}
    U ::= \dots & \lindent{\bbox{Pretype}} \\
    \gap \Region & \lindent{\mbox{Region handle}} \\
    \gap \Ptr[T] & \lindent{\mbox{Pointer}} \\
    t ::= \dots & \lindent{\bbox{Terms}}\\
    \gap \region{x}{t} & \lindent{\mbox{Region block}}\\
    \gap \new{x}{T}\;t & \lindent{\mbox{Pointer allocation}}\\
    \gap !\;t & \lindent{\mbox{Pointer de-reference}}\\
    v ::= \dots & \lindent{\bbox{Values}} \\
    \gap x & \mbox{variables} \\
  \end{array}
  \]
  \end{center}

  \judgement{Type assignment}{\fbox{$\Gamma \ts t \typ T$}}
  \infrule[\ruledef{region}]{%
    \Gamma \ccomma x: \Capt{\UC}{\Region} \ts t \typ T
    \gap
    \Gamma, x: \Capt{\UC}{\Region} \nts \cset{x} \sub \cv(T, \Gamma)
  }{%
    \Gamma \ts \region{x}{t} \typ T }

  \infrule[\ruledef{new}]{%
    x: \Capt{C}{\Region} \in \Gamma \qquad \Gamma \ts t \typ T \\
  }{%
    \Gamma \ts \new{x}{T}\;t \typ \capt{x}{\Ptr[T]}}

  \infrule[\ruledef{deref}]{%
    \Gamma \ts t \typ \Capt{C}{\Ptr[T]} \\
  }{%
    \Gamma \ts !\;t \typ T}
}

In another extension, we study the applicability of \CC to region-based memory
management \cite{tofte1997region}. Briefly, the idea of regions is as follows: we can manually allocate
data in regions, which are lexically delimited scopes. We statically ensure that
after a region is left, no reference to data allocated in the region remains,
which means that we can safely deallocate the entire region. As such, this
approach is a natural fit for being expressed with \CC.

We draft the extension in Figure \ref{fig:region-extension}. We assume standard
store-based operational semantics \cite{grossman2002regions}; in particular, we
assume that the value for pointers mentions the region in its free variables.
The overall approach is analogous to the one in the non-local return extension.
We add a binder for regions and reduce under it; the binder introduces a handle
for the region into scope, which can be used to allocate data in the region.
Similar to the non-local return extension (Section \ref{non-local-returns}), if
either the region handle or a pointer allocated in the region leaves the region,
the extension would be unsound. We again prevent this with the non-derivation subcapturing
precondition on rule \ruleref{region}.

\subsubsection*{Conclusion}
\CC can be used to model a discipline for safe memory management as well as
effects. We can define region-polymorphic functions without needing explicit
region polymorphism. As an example, consider the following function, which
simply de-references an arbitrary pointer:

$$
\tlam{Y}{\Capt{\UC}{\Top}}{%
  \lam{y}{\Capt{\UC}{\Ptr[Y]}}{%
    \;!\;y
  }}
$$

\noindent This does not rule out explicitly qualifying functions with regions where
necessary. Consider the following function, which accepts a handle to a region
and a pointer allocated on that region, and duplicates the pointer it received:

$$
\tlam{Y}{\Capt{\UC}{\Top}}{%
  \lam{x}{\Capt{\UC}{\Region}}{%
    \lam{y}{\capt{x}{\Ptr[Y]}}{%
      \new{x}{Y}(!\;y)
    }}}
$$

\noindent We can use the capture sets of functions to reason about the regions that they can
access. In particular, we can know which regions they cannot possibly access.

\hypertarget{general-algebraic-effects}{%
\subsection{Effect Handlers}\label{general-algebraic-effects}}

As a final case study, we generalize the system of non-local returns to algebraic effects and handlers \cite{plotkin2003algebraic, plotkin2013handling}.
Effect handlers are a program structuring paradigm that allows to model complex control-flow patterns in a structured way.
We build our presentation on effect handlers in capability-passing style \cite{brachthaeuser2017effekt, zhang2019abstraction, brachthaeuser2020effects},
since it perfectly fits our framework of reasoning about free variables and binders.
To keep the presentation of the calculus simple, we follow \citet{zhang2019abstraction} and
limit our effect handlers to only a single operation and no return clauses.

\boxfig{fig:algeff-syntax}{Extended syntax of \CC with support for algebraic effect handlers.}{
  \judgement{Syntax}{}

  \begin{center}
    \[
    \begin{array}[t]{l@{\hspace{10mm}}l}

      U ::= \dots & \lindent{\bbox{Pretypes}} \\
      \gap \Eff[A, B] & \mbox{effect capabilities} \\
      v ::= \dots & \lindent{\bbox{Values}} \\
      \gap x & \mbox{variables} \\
      t ::= \dots & \lindent{\bbox{Terms}}\\
      \gap \HANDLE x: \Eff[A, B] = \LAM{y \; k} s \IN t & \mbox{effect handling} \\
      \gap \app{\eff{x}}{t} & \mbox{effect operation calls} \\
      \E ::= \dots & \lindent{\bbox{Evaluation contexts}} \\
      \gap \HANDLE x: \Eff[A, B] = \LAM{y \; k} s \IN \E & \\
      \gap \app{\eff{x}}{\E} & \\

    \end{array}
    \]
  \end{center}

  }

\autoref{fig:algeff-syntax} extends the basic \CC calculus with additional
syntax for effect handling.
To type capabilities, we add a new pretype $\Eff[A, B]$ that represents effect
operations from $A$ to $B$.
That is, the type parameter $A$ indicates the type of values passed to an effect
operation and type $B$ indicates the type of values returned by an effect operation.
There are two new forms of expressions:
First, the expression $\HANDLE x: \Eff[A, B] = \LAM{y \; k} s \IN t$ acts as a binder and
introduces a capability $x : \UC \; \Eff[A, B]$ in the handled program $t$. The handler
implementation $\LAM{y \; k} s$ has two parameters. Parameter $y$ will be bound to
the argument of type $A$ passed to the effect operation. Parameter $k$
represents the continuation. To avoid having to annotate the type of the
continuation, we slightly diverge from our notation of function binders here,
since the type annotation on $x$ suffices. We also sometimes use the shorthand
$\HANDLE x = h \IN s$.
Second, within the handled program $t$, calling an effect operation with $\app{\eff{x}}{v}$
suspends the current computation, passing the argument $v$ to the handler bound to $x$.

Our description of the operational semantics of handlers closely follows the
\emph{open semantics} presented by \citet{biernacki2020binders}. In this style, effect
handlers are treated as binders for capabilities. Effect operations are
reduced by evaluating \emph{under} those binders, while preserving the usual call-by-value
left to right evaluation strategy for all other abstractions.
As a consequence, like with non-local returns, we add variables to the syntactic category of values.
This way, capability references can be passed as arguments to functions.
Treating effect handlers as binders is a perfect fit for \CC,
since the core idea of \CC is to track free variables in the type of abstractions --
equally relying on lexical binding.

\boxfig{fig:algeff-evaluation}{Extended operational semantics of \CC with support for algebraic effect handlers.}{
  \judgement{Reduction}{\fbox{$t \reduces t$}}

  \infax[\ruledefN{beta-handle}{beta-handle}]{%
    \HANDLE x = h \IN w \reduces w
  }

  \infrule[\ruledef{context-handle}]{%
    \E = \HANDLE x = \LAM{y \; k} s \IN \E'
  }{%
    \E[\app{\eff{x}}{v}] \reduces \subs{k}{\LAM{z} \E[z]}{\subs{y}{v}{s}}
  }
}

\subsubsection*{Operational Semantics}

There are two new reduction rules. The first rule removes a handler abstraction if the program $w$ is already evaluated to a value. Importantly, this is only safe when $w$ does not contain $x$ free. As we will see, our extended typing rules prevent this source of unsoundness.
The second rule connects effect operation calls on $x$ with the corresponding handler binding it.
To reduce an effect call $\app{\eff{x}}{v}$ in a context $\E$, the context needs to provide a handler for $x$.
Furthermore, the evaluation context between the handler and the effect operation call is denoted by $\E'$.
We evaluate the effect operation call by substituting the argument $v$ for $y$,
and the continuation $\LAM{z} \E[z]$ for $k$ into the handler body $s$.
Calling the continuation will reinstantiate the delimited evaluation context $\E$ that also contains the handler binding $x$.
Our operational semantics thus implements \emph{deep handlers} \cite{kammar2013handlers}.

\subsubsection*{Typing Rules}

\boxfig{fig:algeff-typing}{CC$_{<:}$: algebraic effect extension typing rules}{
  \judgement{Type assignment}{\fbox{$\Gamma \ts t \typ T$}}


  \infrule[\ruledef{handle}]{%
    \textit{(1a)\quad} \Gamma \ccomma x: \UC \; \Eff[A, B] \nts \cset{x} \sub \cv(A, \Gamma) \\
    \textit{(1b)\quad} \Gamma \ccomma x: \UC \; \Eff[A, B] \nts \cset{x} \sub \cv(R, \Gamma) \\
    \textit{(2)\quad} \Gamma \ccomma y : A \ccomma k : C_k \: B \rightarrow R \ts s \typ R\ \qquad\qquad
    C_k = (\fv(t) \setminus \{x\}) \cup (\fv(s) \setminus \{y, k\}) \\
    \textit{(3)\quad} \Gamma \ccomma x: \UC \; \Eff[A, B] \ts t \typ R \\
  }{%
    \Gamma \ts \HANDLE x: \Eff[A, B] = \LAM{y \; k} s \IN t \typ R}

  \infrule[\ruledef{do}]{%
    \Gamma \ts x \typ C \; \Eff[A, B]\\
    \Gamma \ts t \typ A
  }{%
    \Gamma \ts \app{\eff{x}}{t} \typ B
  }
}

The typing rules for general effect handlers are naturally more complex than the ones for
non-local returns, but a core principle stays the same: In both cases the \ruleref{handle} rule requires that the locally defined handler $x$ does not escape in the handled expression's result.
We can group the premises into two categories:
The first two rows of premises {\it (1a)} and {\it (1b)} are well-formedness conditions to assert non-escaping.
The other two rows of premises type check the handler body \textit{(2)} and the handled program \textit{(3)}.
Starting from the last premise, we will now work through the different premises, highlighting important aspects.
Premise \textit{(3)} type checks the handled program and brings
a capability of type $\Eff[A, B]$ into scope. By annotating it with the universal
capture set, we mark the capability as tracked.
Premise \textit{(2)} types the body of the handler. It not only binds the argument of the
effect operation $y$, but also the continuation, to which we assign the type
$C_k \; B \rightarrow R$. Interestingly, the type expresses that the continuation captures
exactly the union of free variables of our handled program and the free variables of the handler.
Finally, to guarantee that capabilities cannot escape,
premises {\it (1a)} and {\it (1b)} require that the singleton capture set $\cset{x}$ is not a subcapture
of $\cv(A, \Gamma)$ (and $\cv(R, \Gamma)$ respectively). This has an interesting consequence:
the capture sets of $A$ and $R$ need to be concrete capture sets -- they cannot
be the universal capture set, since then subcapturing would hold.
This restriction lets us rule out programs such as:
\newcommand{\xcap}{\capability{x}}
\newcommand{\ycap}{\capability{y}}
\begin{align*}
&\HANDLE \capability{x} = v \IN%
  \LAM{y} \app{\eff{\xcap}}{y}%
\qquad\reduces\qquad%
\LAM{y} \app{\eff{\xcap}}{y}
\end{align*}

\noindent where $\xcap$ is unbound after reduction.

In addition to restricting the answer type $R$, we also restrict the argument type $A$. The motivation for this is more subtle.
Let us assume the following example adapted from \citet{biernacki2020binders}:

\newcommand{\thunkVar}{\local{thunk}}

\begin{align*}
&\HANDLE \xcap: \Eff[\UC \; \Unit \rightarrow \Unit, \Unit] = \LAM{\thunkVar \; k} \thunkVar () \IN\\
&\qquad\HANDLE \ycap = h \IN\\
&\qquad\qquad\app{\eff{\xcap}}{ \LAM{} \app{\eff{\ycap}}{()} }
\end{align*}

\noindent The example reduces in the following way
\begin{align*}
&\subs{\thunkVar}{...}{\subs{k}{...}{(\app{\thunkVar}{()})}}\\
&\qquad\reduces\\
&\app{(\LAM{} \app{\eff{\ycap}}{()})}{()} \\
&\qquad\reduces\\
&\app{\eff{\ycap}}{()}
\end{align*}

\noindent again leading to an unbound, that is unhandled, effect call on $\ycap$. To avoid this, we need to rule out the possibility that
lambda abstractions closing over capabilities at the call site can be passed to effect operations.
By requiring that the capture set on $A$ needs to be concrete, we rule out the type of
\[
\Eff[\UC \; \Unit \rightarrow \Unit, \Unit]
\]
\noindent instead we would need to give the more precise type
$\Eff[\CAPT{\ycap} \Unit \rightarrow \Unit, \Unit]$. However, this is again ruled out, since it is not well-formed in the outer typing context. $\ycap$ is not bound at the handling site of $\xcap$.

\subsubsection*{Conclusion}
Capture sets allow us to reason about capability safety: without equipping the language with an
additional effect system, we can be sure that all effects are handled simply by establishing that
capabilities do not leave their corresponding effect handlers. Capture sets also allow us to reason
about the effects used by a function. Inspecting the capture set on the type of a function value, we
can conclude which effects can potentially be used by this function and in particular, which effects
\emph{cannot} be used.

\hypertarget{related-work}{%
\section{Related Work}\label{related-work}}

The key distinction between our approach and similar work in the
literature is that our calculus is \emph{descriptive} rather than \emph{prescriptive}. That
is, our calculus can be understood as tracking aliasing with types, through
which we can express many different concepts. Broadly speaking, other approaches
such as ownership systems, linear types or borrowing use types to restrict some
terms to follow a concrete aliasing hygiene.

Related literature ranges from object capabilities, effect systems, algebraic effects and handlers, to
region-based memory management. Here we offer a comparison to the work that we believe is closely
related.

\subsubsection*{Second-Class Values}
Motivated by goals very similar to our work,
\citet{osvald2016gentrification} present a type-based escape analysis \cite{hannan1998escape} that
allows the tracking of capabilities and prevents them from escaping. They achieve this by distinguishing
between first-class values and \emph{second-class values}. First-class values can be passed to,
returned from, and closed over by functions. In contrast, second-class values are restricted in that they
can never be returned from functions and can only be closed over by other second-class values.
This distinction is an elegant and simple solution that can also encode borrowing \cite{osvald2017rust} and
enables a lightweight form of effect polymorphism \cite{brachthaeuser2020effects}.
However, what makes their calculus so simple also makes it restrictive: Second-class values cannot be returned
under any circumstances, even when this would be sound. In our present work, we relax this restriction by
generalizing first and second-class values to accurately track the captured variables in the type.
First-class values and types are translated into \CC terms and types annotated with the
empty capture set $\emptycs$. That is, they can freely be passed to, closed over, and returned from all other functions.
Second-class values and types are translated into \CC terms and types annotated with the
\emph{universal} capture set $\UC$. That is, they are tracked and cannot be returned or closed over by first-class functions.
Yet, they can close over other second-class values annotated with the universal capture set.

\subsubsection*{Effect Systems}
Effect systems extend the static guarantees of type systems to additionally describe the side-effects a computation may perform \cite{lucassen1988polymorphic}.
This enables programmers to reason about purity and perform semantics-preserving refactorings and security analysts to determine the privileges required by
a computation to be executed. While the \CC calculus can be used to achieve effect safety, there is an important difference to traditional effect systems.
Effect systems typically track the \emph{use} of effectful operations, while in \CC we track the \emph{mention} of resources / capabilities \cite{gordon2020designing}.
This manifests in two ways.

First, typing in the \CC calculus is about \emph{values}, while typing in effect systems is about side-effecting \emph{expressions}. This becomes visible in Lemma \ref{lemma:prediction},
which relates the free variables of a value with the capture set in its types. Let us assume the following example expression
\newcommand{\abortCap}{\capability{abort}}
\begin{align*}
t : \CAPT{} \Unit
t = \app{\abortCap}{()}; ()
\end{align*}

\noindent that is a call to $\abortCap$ followed by returning the unit value.
Effect systems would register the call to $\abortCap$ in the type of the expression, while the \CC calculus assigns it the type $\CAPT{} \Unit$.

This might seem counter-intuitive at first, but is resolved  by the second difference
with effect systems: Reasoning with \CC is about the \emph{context}, while reasoning with effect systems is about \emph{programs}. Since the context includes a binding for $\abortCap$
(that is, \emph{the capability is in scope}) we take it for granted that the expression can use it. Delaying a computation with a (type or term) abstraction
externalizes the dependencies on the context and we obtain: $\LAM{} t \;:\; \CAPT{\abortCap} () \rightarrow \Unit$. Since we only track the dependencies on the
context (that is, mention) and not the use of effect operations, we assign the exact same type to $t' = \LAM{} \abortCap; ()$. In contrast, traditional effect systems
would assign a pure type to $t'$ since it is observationally equivalent to $\LAM{} ()$.
In consequence, while effect systems suggest to reason about purity, in \CC it makes sense to reason about \emph{contextual purity} \cite{brachthaeuser2020effects}.
Delaying computation allows to partially navigate between the two modes of reasoning.

\subsubsection*{Capabilities}
In the (object-)capability model of programming \cite{crary1999typed, boyland2001capabilities, miller2006robust},
performing security critical operations requires access to a \emph{capability}. Such a capability
can be seen as the constructive proof that the holder is entitled to perform the
critical operation. Reasoning about which operations a module can perform is reduced
to reasoning about which references to capabilities a module holds.

The Wyvern programming language \cite{melicher2017capability}, embraces this mode of reasoning
and establishes authority safety by restricting access to capabilities. The language
distinguishes between stateful \emph{resource modules} and \emph{pure modules}. Access to
resource modules is restricted and only possible through capabilities. Determining the authority
granted by a module amounts to manually inspecting its type signature and all of the type signatures of its
transitive imports. To support this analysis, \citet{melicher2020controlling} extends the language with a
fine-grained effect system that tracks access of capabilities in the type of methods.
The extended language supports effect abstraction via abstract effect members, which can also be
bounded \cite{fish2020case} to integrate well with the structural subtyping of Wyvern.
Using this effect system, \citet{melicher2020controlling} formalizes the authority of a module
by collecting the set of effects annotated on methods and transitively of all modules returned
by those methods.

In the \CC calculus, reasoning about authority and capability safety is very similar.
However, access to capabilities is immediately recorded in the capture set.
Modelling modules via function abstraction, the capture set of a function directly reflects the authority of that function.
As an important difference, the \CC calculus does not include an effect system and thus tracks \emph{mention} rather than \emph{use}.
Wyvern allows effect abstraction to be expressed directly via (abstract) effect members on modules; we envision that
\CC can express an analogous form of effect abstraction indirectly via term abstraction and capture polymorphism,
similarly to how existential quantification can be encoded using universal quantification.

\subsubsection*{Coeffects}
Effect systems can be understood as tracking additional information about the output of a typing judgement
$\Gamma \ts e \typ \highlight{\varepsilon} \; \tau$. Dually, coeffect systems \cite{petricek2014coeffects}
equip the context in which an expression is typechecked with additional structure $\Gamma \highlight{@ \; \mathcal{C}} \ts e \typ \tau$.
\citet{petricek2014coeffects} show that coeffects can be instantiated to express linearity of resources, implicit parameters, and many more.
Very similarly, the capture set on term and type abstractions expresses requirements about the context in which these abstractions can be executed.
While \citeauthor{petricek2014coeffects} present a very general framework that can be instantiated with many different use cases, their work is
based on simply typed lambda calculus. In contrast, in the present paper we embrace subtyping (alongside with all its advantages and challenges)
that arises from the notion of capture sets and base our calculus on \fsub.

With the goal to retrofit existing impure languages with a mechanism to reason about purity,
\citet{choudhury2020recovering} introduce a calculus that distinguishes between safe (that is, pure)
terms and impure terms. A special type $\square T$ witnesses that the term cannot close over
any impure bindings, that is over potentially effectful resources. The type comes with an introduction
form  $\BOX e$, which type checks the expression $e$ in a context that only contains pure bindings,
and an elimination form $\LET x = e \IN e_2$, which introduces a pure binding in the context.
Similarly, our capture sets serve as a type-level certificate that the value only closes over
those tracked bindings mentioned in the capture set. To facilitate the comparison with the work
by \citeauthor{choudhury2020recovering}, we can conceptually rephrase our typing rule for abstractions to:

\infrule[\ruledef{abs-filter}]{%
  \highlight{\Gamma^{C}} \ccomma x: S \ts t \typ T
}{%
  \Gamma \ts \LAM{x: S} t \typ \highlight{C} \; \TLAM{x : S} T }

\noindent This rule filters all bindings from the typing context $\Gamma$ that are not captured by $C$. Furthermore,
a binding of type $x : \Capt{\UC}{P} \in \Gamma$ corresponds to an impure binding, whereas
$x : \capt{}{P} \in \Gamma$ models a pure binding. As in the work by \citeauthor{choudhury2020recovering},
pure bindings cannot close over impure bindings and our type $\capt{}{P}$ thus corresponds to the purity witness $\square T$.
As such, \CC can very similarly be used to gradually recover
purity in an impure language. Dual to the encoding proposed in the comparison with second-class values,
we can annotate all existing terms and types with the universal capture set and selectively mark those
functions that are pure with the empty capture set.
Furthermore, our calculus not only allows to express pure and impure
bindings, but extends the binary notion of purity to concrete, finite capture sets. While the
system of \citeauthor{choudhury2020recovering} has an appealing simplicity, \CC incorporates subtyping
and a limited form of term dependency, naturally leading to a naturally more complex system with additional well-formedness conditions.

\subsubsection*{Capabilities in Scala}\label{capabilities-in-scala}

Modeling resources as capabilities and passing them explicitly to other modules
can quickly become tedious. The Scala language comes equipped with \emph{contextual
abstractions} that allow programmers to abstract over capabilities without having to pass them
explicitly. This includes type-directed implicit parameters and implicit function types \cite{odersky2017simplicitly}
which have been introduced in Scala 3. Here is an example of how operations tracking exception
capabilities can be modeled in Scala 3\footnote{In practice, one would rather equip the language-defined `try` and `throw` constructs with similar types.}.
\begin{lstlisting}[language=scalaish,basicstyle=\footnotesize\ttfamily]
    class Exc                         // Exception classes
    class DivByZero extends Exc

    class CanRaise[E <: Exc]          // Capability class
    infix type raises[A, E <: Exc] =  // Capability wrapper
      CanRaise[E] ?=> A

    // Basic exception operations
    def handleWith[A, E <: Exc](body: CanRaise[E] ?=> A)(handler: E => A): A = ...
    def raise[E <: Exc](exc: E): CanRaise[E] ?=> Nothing = ...

    def safeDiv(x: Int, y: Int): Int raises DivByZero =
      if y == 0 then raise(DivByZero()) else x / y
\end{lstlisting}
\noindent Here, we use the implicit function type \verb@CanRaise[E] ?=> A@ to represent expressions
that return a value of type \verb@A@ but that also have the capability to
raise an exception of type \verb@E@. That type can be abbreviated by means of the given type alias
to\ \verb@A raises E@\ . Hence, as can be seen in function \texttt{safeDiv},
programmers need not bind or pass the capability explicitly. While very useful
for modeling contextual abstractions,
implicits do not guarantee effect safe usage of capabilities.
It has been proposed to
combine them with second-class values \cite{brachthaeuser2017effekt,osvald2016gentrification}, or an
embedding of an effect system using other advanced type-level machinery of Scala
\cite{brachthaeuser2020effekt}. With \CC, in this paper, we propose another mechanism to
statically guarantee capability safety that is more expressive than second-class values,
and more lightweight than the embedding by \citet{brachthaeuser2020effekt}.
In an imaginary extension of Scala with capture sets, the \verb@handleWith@ operation would
create a local capability of class \verb@CanRaise@ that it passes to its \verb@body@, while
checking that the result of \verb@body@ does not contain the local capability
in its capture set, similar to the technique used in Section~\ref{non-local-returns}.

\subsubsection*{Regions}
Earlier in Section \ref{region-extension} we showed that we could extend \CC
with support for simple, stack-based regions. Here, we compare our extension
with Cyclone \cite{grossman2002regions}, a C-like language
featuring region-based memory management.

Our extension is, in some ways, limited compared to Cyclone. Because \CC does not
support data structures containing tracked references, we cannot stack-allocate
data structures containing pointers. However, we see no reason to believe this
is a fundamental limitation - with an improved version of \CC that does support
impure data structures, the extension should naturally allow data structures to
mention pointers.

Our extension also does not support sub-regioning. The sub-region problem can be
defined as follows: given two regions $x$ and $y$, with $y$ being shorter-lived
(or more nested), can we pass a pointer of type $\capt{x}{\Ptr[T]}$ where a
pointer of type $\capt{y}{\Ptr[T]}$ is expected? We would be able to do so if we
knew that $\{x\} \sub \{y\}$ based on the bounds of $x$; however, since $y$ is
the more nested region, that is not possible. To support sub-regioning, \CC needs
the reverse bound: the ability to know that $\{y\} \sub \{x\}$ based on the bounds of
$y$, i.e. the ability to put lower bounds on capture sets of term variables.

However, our extension as presented already supports simple regions while using
the more widely applicable type system of \CC. In comparison, Cyclone has much
more specialised features. It has a separate concept of region variables $\rho$
and region handles $\kw{region}(\rho)$. Region-polymorphic functions need to be
explicitly qualified with region variables. It tracks the {\it use} of regions
with an effect system, and, to avoid explicit effect polymorphism, defines a
bespoke {\tt regions\_of} type operator. To contrast that with our calculus \CC,
observe that we do not need to introduce an effect system to support regions;
we support region polymorphism without introducing region variables (as have discussed in in Section \ref{region-extension}),
and we do not need to qualify region-polymorphic functions with regions
unnecessarily. Furthermore, we conjecture that one could lift both of the limitations we
discussed previously without introducing any region-specific features to \CC.

\section{Conclusion}
In this paper, we introduced the \CC calculus, a type-theoretic foundation for tracking
free variables. The calculus is a modest addition to \fsub, integrating the tracking
of free variables with subtyping. However, subtyping also required us to equip the calculus with
additional well-formedness conditions to establish soundness. The calculus satisfies interesting
meta-theoretical properties. In particular, capture sets soundly approximate the free variables
captured by a value, giving rise to reasoning about effect safety in terms of capability safety.
We evaluated the practical applicability of the calculus by presenting several language extensions,
each making use of the newly gained expressive power of the type system. Capture polymorphism provides
a uniform way to express region and effect polymorphism. In the future, it would be interesting to
fully implement the calculus in a practical programming language, to further explore the gained expressivity.

\section{Acknowledgements}
We thank the authors and maintainters of Proof General \cite{ProofGeneral2000}
and {\tt company-coq} \cite{CompanyCoq2016}, Coq development environments which
were indispensable when working on this paper.

\bibliography{bibliography}


\begin{thebibliography}{30}


\ifx \showCODEN    \undefined \def \showCODEN     #1{\unskip}     \fi
\ifx \showDOI      \undefined \def \showDOI       #1{#1}\fi
\ifx \showISBNx    \undefined \def \showISBNx     #1{\unskip}     \fi
\ifx \showISBNxiii \undefined \def \showISBNxiii  #1{\unskip}     \fi
\ifx \showISSN     \undefined \def \showISSN      #1{\unskip}     \fi
\ifx \showLCCN     \undefined \def \showLCCN      #1{\unskip}     \fi
\ifx \shownote     \undefined \def \shownote      #1{#1}          \fi
\ifx \showarticletitle \undefined \def \showarticletitle #1{#1}   \fi
\ifx \showURL      \undefined \def \showURL       {\relax}        \fi
\providecommand\bibfield[2]{#2}
\providecommand\bibinfo[2]{#2}
\providecommand\natexlab[1]{#1}
\providecommand\showeprint[2][]{arXiv:#2}

\bibitem[\protect\citeauthoryear{Aspinall}{Aspinall}{2000}]%
        {ProofGeneral2000}
\bibfield{author}{\bibinfo{person}{David Aspinall}.}
  \bibinfo{year}{2000}\natexlab{}.
\newblock \showarticletitle{Proof General: A Generic Tool for Proof
  Development}.
\newblock In \bibinfo{booktitle}{\emph{Tools and Algorithms for the
  Construction and Analysis of Systems, {TACAS} 2000}},
  \bibfield{editor}{\bibinfo{person}{Susanne Graf} {and}
  \bibinfo{person}{Michael Schwartzbach}} (Eds.). \bibinfo{series}{Lecture
  Notes in Computer Science}, Vol.~\bibinfo{volume}{1785}.
  \bibinfo{publisher}{Springer Berlin Heidelberg}, \bibinfo{pages}{38--43}.
\newblock
\showISBNx{978-3-540-67282-1}
\urldef\tempurl%
\url{https://doi.org/10.1007/3-540-46419-0_3}
\showDOI{\tempurl}


\bibitem[\protect\citeauthoryear{Aydemir, Chargu{\'{e}}raud, Pierce, Pollack,
  and Weirich}{Aydemir et~al\mbox{.}}{2008}]%
        {aydemir2008engineering}
\bibfield{author}{\bibinfo{person}{Brian~E. Aydemir}, \bibinfo{person}{Arthur
  Chargu{\'{e}}raud}, \bibinfo{person}{Benjamin~C. Pierce},
  \bibinfo{person}{Randy Pollack}, {and} \bibinfo{person}{Stephanie Weirich}.}
  \bibinfo{year}{2008}\natexlab{}.
\newblock \showarticletitle{Engineering formal metatheory}. In
  \bibinfo{booktitle}{\emph{Proceedings of the 35th {ACM} {SIGPLAN-SIGACT}
  Symposium on Principles of Programming Languages, {POPL} 2008, San Francisco,
  California, USA, January 7-12, 2008}},
  \bibfield{editor}{\bibinfo{person}{George~C. Necula} {and}
  \bibinfo{person}{Philip Wadler}} (Eds.). \bibinfo{publisher}{{ACM}},
  \bibinfo{pages}{3--15}.
\newblock
\urldef\tempurl%
\url{https://doi.org/10.1145/1328438.1328443}
\showDOI{\tempurl}


\bibitem[\protect\citeauthoryear{Bertot and Cast{\'{e}}ran}{Bertot and
  Cast{\'{e}}ran}{2004}]%
        {yves04interactive}
\bibfield{author}{\bibinfo{person}{Yves Bertot} {and} \bibinfo{person}{Pierre
  Cast{\'{e}}ran}.} \bibinfo{year}{2004}\natexlab{}.
\newblock \bibinfo{booktitle}{\emph{Interactive Theorem Proving and Program
  Development, {Coq'Art}:The Calculus of Inductive Constructions}}.
\newblock


\bibitem[\protect\citeauthoryear{Biernacki, Pir\'{o}g, Polesiuk, and
  Sieczkowski}{Biernacki et~al\mbox{.}}{2020}]%
        {biernacki2020binders}
\bibfield{author}{\bibinfo{person}{Dariusz Biernacki}, \bibinfo{person}{Maciej
  Pir\'{o}g}, \bibinfo{person}{Piotr Polesiuk}, {and} \bibinfo{person}{Filip
  Sieczkowski}.} \bibinfo{year}{2020}\natexlab{}.
\newblock \showarticletitle{Binders by Day, Labels by Night: Effect Instances
  via Lexically Scoped Handlers}. In \bibinfo{booktitle}{\emph{Proceedings of
  the Symposium on Principles of Programming Languages}}.
  \bibinfo{publisher}{ACM}, \bibinfo{address}{New York, NY, USA}.
\newblock


\bibitem[\protect\citeauthoryear{B{\"o}hm and Berarducci}{B{\"o}hm and
  Berarducci}{1985}]%
        {bohmAutomaticSynthesisTyped1985}
\bibfield{author}{\bibinfo{person}{Corrado B{\"o}hm} {and}
  \bibinfo{person}{Alessandro Berarducci}.} \bibinfo{year}{1985}\natexlab{}.
\newblock \showarticletitle{Automatic Synthesis of Typed {$\lambda$}-Programs
  on Term Algebras}.
\newblock \bibinfo{journal}{\emph{Theoretical Computer Science}}
  \bibinfo{volume}{39} (\bibinfo{year}{1985}), \bibinfo{pages}{135--154}.
\newblock


\bibitem[\protect\citeauthoryear{Boyland, Noble, and Retert}{Boyland
  et~al\mbox{.}}{2001}]%
        {boyland2001capabilities}
\bibfield{author}{\bibinfo{person}{John Boyland}, \bibinfo{person}{James
  Noble}, {and} \bibinfo{person}{William Retert}.}
  \bibinfo{year}{2001}\natexlab{}.
\newblock \showarticletitle{Capabilities for Sharing}. In
  \bibinfo{booktitle}{\emph{ECOOP 2001 --- Object-Oriented Programming}},
  \bibfield{editor}{\bibinfo{person}{J{\o}rgen~Lindskov Knudsen}} (Ed.).
  \bibinfo{publisher}{Springer Berlin Heidelberg}, \bibinfo{address}{Berlin,
  Heidelberg}, \bibinfo{pages}{2--27}.
\newblock
\showISBNx{978-3-540-45337-6}


\bibitem[\protect\citeauthoryear{Brachth\"{a}user and
  Schuster}{Brachth\"{a}user and Schuster}{2017}]%
        {brachthaeuser2017effekt}
\bibfield{author}{\bibinfo{person}{Jonathan~Immanuel Brachth\"{a}user} {and}
  \bibinfo{person}{Philipp Schuster}.} \bibinfo{year}{2017}\natexlab{}.
\newblock \showarticletitle{{Effekt}: Extensible Algebraic Effects in {Scala}
  (Short Paper)}. In \bibinfo{booktitle}{\emph{Proceedings of the International
  Symposium on Scala}} (Vancouver, BC, Canada). \bibinfo{publisher}{ACM},
  \bibinfo{address}{New York, NY, USA}.
\newblock
\urldef\tempurl%
\url{https://doi.org/10.1145/3136000.3136007}
\showDOI{\tempurl}


\bibitem[\protect\citeauthoryear{Brachth{\"a}user, Schuster, and
  Ostermann}{Brachth{\"a}user et~al\mbox{.}}{2020a}]%
        {brachthaeuser2020effects}
\bibfield{author}{\bibinfo{person}{Jonathan~Immanuel Brachth{\"a}user},
  \bibinfo{person}{Philipp Schuster}, {and} \bibinfo{person}{Klaus Ostermann}.}
  \bibinfo{year}{2020}\natexlab{a}.
\newblock \showarticletitle{Effects as Capabilities: Effect Handlers and
  Lightweight Effect Polymorphism}.
\newblock \bibinfo{journal}{\emph{Proc. ACM Program. Lang.}}
  \bibinfo{volume}{4}, \bibinfo{number}{OOPSLA}, Article
  \bibinfo{articleno}{126} (\bibinfo{date}{Nov.} \bibinfo{year}{2020}).
\newblock
\urldef\tempurl%
\url{https://doi.org/10.1145/3428194}
\showDOI{\tempurl}


\bibitem[\protect\citeauthoryear{Brachth{\"a}user, Schuster, and
  Ostermann}{Brachth{\"a}user et~al\mbox{.}}{2020b}]%
        {brachthaeuser2020effekt}
\bibfield{author}{\bibinfo{person}{Jonathan~Immanuel Brachth{\"a}user},
  \bibinfo{person}{Philipp Schuster}, {and} \bibinfo{person}{Klaus Ostermann}.}
  \bibinfo{year}{2020}\natexlab{b}.
\newblock \showarticletitle{{Effekt}: Capability-Passing Style for Type- and
  Effect-safe, Extensible Effect Handlers in {Scala}}.
\newblock \bibinfo{journal}{\emph{Journal of Functional Programming}}
  (\bibinfo{year}{2020}).
\newblock
\urldef\tempurl%
\url{https://doi.org/10.1017/S0956796820000027}
\showDOI{\tempurl}


\bibitem[\protect\citeauthoryear{Choudhury and Krishnaswami}{Choudhury and
  Krishnaswami}{2020}]%
        {choudhury2020recovering}
\bibfield{author}{\bibinfo{person}{Vikraman Choudhury} {and}
  \bibinfo{person}{Neel Krishnaswami}.} \bibinfo{year}{2020}\natexlab{}.
\newblock \showarticletitle{Recovering Purity with Comonads and Capabilities}.
\newblock \bibinfo{journal}{\emph{Proc. ACM Program. Lang.}}
  \bibinfo{volume}{4}, \bibinfo{number}{ICFP}, Article \bibinfo{articleno}{111}
  (\bibinfo{date}{Aug.} \bibinfo{year}{2020}), \bibinfo{numpages}{28}~pages.
\newblock
\urldef\tempurl%
\url{https://doi.org/10.1145/3408993}
\showDOI{\tempurl}


\bibitem[\protect\citeauthoryear{Crary, Walker, and Morrisett}{Crary
  et~al\mbox{.}}{1999}]%
        {crary1999typed}
\bibfield{author}{\bibinfo{person}{Karl Crary}, \bibinfo{person}{David Walker},
  {and} \bibinfo{person}{Greg Morrisett}.} \bibinfo{year}{1999}\natexlab{}.
\newblock \showarticletitle{Typed Memory Management in a Calculus of
  Capabilities}. In \bibinfo{booktitle}{\emph{Proceedings of the 26th ACM
  SIGPLAN-SIGACT Symposium on Principles of Programming Languages}} (San
  Antonio, Texas, USA) \emph{(\bibinfo{series}{POPL ’99})}.
  \bibinfo{publisher}{Association for Computing Machinery},
  \bibinfo{address}{New York, NY, USA}, \bibinfo{pages}{262–275}.
\newblock
\showISBNx{1581130953}
\urldef\tempurl%
\url{https://doi.org/10.1145/292540.292564}
\showDOI{\tempurl}


\bibitem[\protect\citeauthoryear{development team}{development team}{2004}]%
        {Coq}
\bibfield{author}{\bibinfo{person}{The~Coq development team}.}
  \bibinfo{year}{2004}\natexlab{}.
\newblock \bibinfo{booktitle}{\emph{The Coq proof assistant reference manual}}.
\newblock LogiCal Project.
\newblock
\urldef\tempurl%
\url{http://coq.inria.fr}
\showURL{%
\tempurl}
\newblock
\shownote{Version 8.0.}


\bibitem[\protect\citeauthoryear{Fish, Melicher, and Aldrich}{Fish
  et~al\mbox{.}}{2020}]%
        {fish2020case}
\bibfield{author}{\bibinfo{person}{Jennifer~A. Fish}, \bibinfo{person}{Darya
  Melicher}, {and} \bibinfo{person}{Jonathan Aldrich}.}
  \bibinfo{year}{2020}\natexlab{}.
\newblock \showarticletitle{A Case Study in Language-Based Security: Building
  an I/O Library for Wyvern}. In \bibinfo{booktitle}{\emph{Proceedings of the
  2020 ACM SIGPLAN International Symposium on New Ideas, New Paradigms, and
  Reflections on Programming and Software}} (Virtual, USA)
  \emph{(\bibinfo{series}{Onward! 2020})}. \bibinfo{publisher}{Association for
  Computing Machinery}, \bibinfo{address}{New York, NY, USA},
  \bibinfo{pages}{34–47}.
\newblock
\showISBNx{9781450381789}
\urldef\tempurl%
\url{https://doi.org/10.1145/3426428.3426913}
\showDOI{\tempurl}


\bibitem[\protect\citeauthoryear{Gordon}{Gordon}{2020}]%
        {gordon2020designing}
\bibfield{author}{\bibinfo{person}{Colin~S. Gordon}.}
  \bibinfo{year}{2020}\natexlab{}.
\newblock \showarticletitle{{Designing with Static Capabilities and Effects:
  Use, Mention, and Invariants (Pearl)}}. In \bibinfo{booktitle}{\emph{34th
  European Conference on Object-Oriented Programming (ECOOP 2020)}}
  \emph{(\bibinfo{series}{Leibniz International Proceedings in Informatics
  (LIPIcs)}, Vol.~\bibinfo{volume}{166})},
  \bibfield{editor}{\bibinfo{person}{Robert Hirschfeld} {and}
  \bibinfo{person}{Tobias Pape}} (Eds.). \bibinfo{publisher}{Schloss
  Dagstuhl--Leibniz-Zentrum f{\"u}r Informatik}, \bibinfo{address}{Dagstuhl,
  Germany}, \bibinfo{pages}{10:1--10:25}.
\newblock
\showISBNx{978-3-95977-154-2}
\showISSN{1868-8969}
\urldef\tempurl%
\url{https://doi.org/10.4230/LIPIcs.ECOOP.2020.10}
\showDOI{\tempurl}


\bibitem[\protect\citeauthoryear{Grossman, Morrisett, Jim, Hicks, Wang, and
  Cheney}{Grossman et~al\mbox{.}}{2002}]%
        {grossman2002regions}
\bibfield{author}{\bibinfo{person}{Dan Grossman}, \bibinfo{person}{Greg
  Morrisett}, \bibinfo{person}{Trevor Jim}, \bibinfo{person}{Michael Hicks},
  \bibinfo{person}{Yanling Wang}, {and} \bibinfo{person}{James Cheney}.}
  \bibinfo{year}{2002}\natexlab{}.
\newblock \showarticletitle{Region-Based Memory Management in Cyclone}. In
  \bibinfo{booktitle}{\emph{Proceedings of the ACM SIGPLAN 2002 Conference on
  Programming Language Design and Implementation}} (Berlin, Germany)
  \emph{(\bibinfo{series}{PLDI ’02})}. \bibinfo{publisher}{Association for
  Computing Machinery}, \bibinfo{address}{New York, NY, USA},
  \bibinfo{pages}{282–293}.
\newblock
\showISBNx{1581134630}
\urldef\tempurl%
\url{https://doi.org/10.1145/512529.512563}
\showDOI{\tempurl}


\bibitem[\protect\citeauthoryear{Hannan}{Hannan}{1998}]%
        {hannan1998escape}
\bibfield{author}{\bibinfo{person}{John Hannan}.}
  \bibinfo{year}{1998}\natexlab{}.
\newblock \showarticletitle{A Type-based Escape Analysis for Functional
  Languages}.
\newblock \bibinfo{journal}{\emph{Journal of Functional Programming}}
  \bibinfo{volume}{8}, \bibinfo{number}{3} (\bibinfo{date}{May}
  \bibinfo{year}{1998}), \bibinfo{pages}{239--273}.
\newblock


\bibitem[\protect\citeauthoryear{Kammar, Lindley, and Oury}{Kammar
  et~al\mbox{.}}{2013}]%
        {kammar2013handlers}
\bibfield{author}{\bibinfo{person}{Ohad Kammar}, \bibinfo{person}{Sam Lindley},
  {and} \bibinfo{person}{Nicolas Oury}.} \bibinfo{year}{2013}\natexlab{}.
\newblock \showarticletitle{Handlers in Action}. In
  \bibinfo{booktitle}{\emph{Proceedings of the International Conference on
  Functional Programming}} (Boston, Massachusetts, USA).
  \bibinfo{publisher}{ACM}, \bibinfo{address}{New York, NY, USA},
  \bibinfo{pages}{145--158}.
\newblock


\bibitem[\protect\citeauthoryear{Lucassen and Gifford}{Lucassen and
  Gifford}{1988}]%
        {lucassen1988polymorphic}
\bibfield{author}{\bibinfo{person}{J.~M. Lucassen} {and} \bibinfo{person}{D.~K.
  Gifford}.} \bibinfo{year}{1988}\natexlab{}.
\newblock \showarticletitle{Polymorphic Effect Systems}. In
  \bibinfo{booktitle}{\emph{Proceedings of the Symposium on Principles of
  Programming Languages}} (San Diego, California, USA)
  \emph{(\bibinfo{series}{POPL '88})}. \bibinfo{publisher}{Association for
  Computing Machinery}, \bibinfo{address}{New York, NY, USA},
  \bibinfo{pages}{47–57}.
\newblock
\urldef\tempurl%
\url{https://doi.org/10.1145/73560.73564}
\showDOI{\tempurl}


\bibitem[\protect\citeauthoryear{Melicher}{Melicher}{2020}]%
        {melicher2020controlling}
\bibfield{author}{\bibinfo{person}{Darya Melicher}.}
  \bibinfo{year}{2020}\natexlab{}.
\newblock \emph{\bibinfo{title}{Controlling Module Authority Using Programming
  Language Design}}.
\newblock \bibinfo{thesistype}{Ph.D. Dissertation}. \bibinfo{school}{Carnegie
  Mellon University}.
\newblock


\bibitem[\protect\citeauthoryear{Melicher, Shi, Potanin, and Aldrich}{Melicher
  et~al\mbox{.}}{2017}]%
        {melicher2017capability}
\bibfield{author}{\bibinfo{person}{Darya Melicher},
  \bibinfo{person}{Yangqingwei Shi}, \bibinfo{person}{Alex Potanin}, {and}
  \bibinfo{person}{Jonathan Aldrich}.} \bibinfo{year}{2017}\natexlab{}.
\newblock \showarticletitle{A capability-based module system for authority
  control}. In \bibinfo{booktitle}{\emph{31st European Conference on
  Object-Oriented Programming (ECOOP 2017)}}. Schloss Dagstuhl-Leibniz-Zentrum
  fuer Informatik.
\newblock


\bibitem[\protect\citeauthoryear{Miller}{Miller}{2006}]%
        {miller2006robust}
\bibfield{author}{\bibinfo{person}{Mark~Samuel Miller}.}
  \bibinfo{year}{2006}\natexlab{}.
\newblock \emph{\bibinfo{title}{Robust Composition: Towards a Unified Approach
  to Access Control and Concurrency Control}}.
\newblock \bibinfo{thesistype}{Ph.D. Dissertation}. \bibinfo{school}{Johns
  Hopkins University}.
\newblock Advisor(s) Shapiro, Jonathan S.
\newblock
\newblock
\shownote{AAI3245526.}


\bibitem[\protect\citeauthoryear{Odersky, Blanvillain, Liu, Biboudis, Miller,
  and Stucki}{Odersky et~al\mbox{.}}{2017}]%
        {odersky2017simplicitly}
\bibfield{author}{\bibinfo{person}{Martin Odersky}, \bibinfo{person}{Olivier
  Blanvillain}, \bibinfo{person}{Fengyun Liu}, \bibinfo{person}{Aggelos
  Biboudis}, \bibinfo{person}{Heather Miller}, {and} \bibinfo{person}{Sandro
  Stucki}.} \bibinfo{year}{2017}\natexlab{}.
\newblock \showarticletitle{Simplicitly: Foundations and Applications of
  Implicit Function Types}.
\newblock \bibinfo{journal}{\emph{Proc. ACM Program. Lang.}}
  \bibinfo{volume}{2}, \bibinfo{number}{POPL}, Article \bibinfo{articleno}{42}
  (\bibinfo{date}{Dec.} \bibinfo{year}{2017}), \bibinfo{numpages}{29}~pages.
\newblock
\urldef\tempurl%
\url{https://doi.org/10.1145/3158130}
\showDOI{\tempurl}


\bibitem[\protect\citeauthoryear{Osvald, Essertel, Wu, Alay{\'{o}}n, and
  Rompf}{Osvald et~al\mbox{.}}{2016}]%
        {osvald2016gentrification}
\bibfield{author}{\bibinfo{person}{Leo Osvald},
  \bibinfo{person}{Gr{\'{e}}gory~M. Essertel}, \bibinfo{person}{Xilun Wu},
  \bibinfo{person}{Lilliam I.~Gonz{\'{a}}lez Alay{\'{o}}n}, {and}
  \bibinfo{person}{Tiark Rompf}.} \bibinfo{year}{2016}\natexlab{}.
\newblock \showarticletitle{Gentrification gone too far? affordable 2nd-class
  values for fun and (co-)effect}. In \bibinfo{booktitle}{\emph{Proceedings of
  the 2016 {ACM} {SIGPLAN} International Conference on Object-Oriented
  Programming, Systems, Languages, and Applications, {OOPSLA} 2016, part of
  {SPLASH} 2016, Amsterdam, The Netherlands, October 30 - November 4, 2016}},
  \bibfield{editor}{\bibinfo{person}{Eelco Visser} {and}
  \bibinfo{person}{Yannis Smaragdakis}} (Eds.). \bibinfo{publisher}{{ACM}},
  \bibinfo{pages}{234--251}.
\newblock
\urldef\tempurl%
\url{https://doi.org/10.1145/2983990.2984009}
\showDOI{\tempurl}


\bibitem[\protect\citeauthoryear{Osvald and Rompf}{Osvald and Rompf}{2017}]%
        {osvald2017rust}
\bibfield{author}{\bibinfo{person}{Leo Osvald} {and} \bibinfo{person}{Tiark
  Rompf}.} \bibinfo{year}{2017}\natexlab{}.
\newblock \showarticletitle{Rust-like Borrowing with 2nd-Class Values (Short
  Paper)}. In \bibinfo{booktitle}{\emph{Proceedings of the International
  Symposium on Scala}} (Vancouver, BC, Canada) \emph{(\bibinfo{series}{SCALA
  2017})}. \bibinfo{publisher}{Association for Computing Machinery},
  \bibinfo{address}{New York, NY, USA}, \bibinfo{pages}{13–17}.
\newblock
\urldef\tempurl%
\url{https://doi.org/10.1145/3136000.3136010}
\showDOI{\tempurl}


\bibitem[\protect\citeauthoryear{Petricek, Orchard, and Mycroft}{Petricek
  et~al\mbox{.}}{2014}]%
        {petricek2014coeffects}
\bibfield{author}{\bibinfo{person}{Tomas Petricek}, \bibinfo{person}{Dominic
  Orchard}, {and} \bibinfo{person}{Alan Mycroft}.}
  \bibinfo{year}{2014}\natexlab{}.
\newblock \showarticletitle{Coeffects: A Calculus of Context-Dependent
  Computation}. In \bibinfo{booktitle}{\emph{Proceedings of the International
  Conference on Functional Programming}} (Gothenburg, Sweden).
  \bibinfo{publisher}{ACM}, \bibinfo{address}{New York, NY, USA},
  \bibinfo{pages}{123–135}.
\newblock
\urldef\tempurl%
\url{https://doi.org/10.1145/2628136.2628160}
\showDOI{\tempurl}


\bibitem[\protect\citeauthoryear{Pit-Claudel and Courtieu}{Pit-Claudel and
  Courtieu}{2016}]%
        {CompanyCoq2016}
\bibfield{author}{\bibinfo{person}{Clément Pit-Claudel} {and}
  \bibinfo{person}{Pierre Courtieu}.} \bibinfo{year}{2016}\natexlab{}.
\newblock \showarticletitle{Company-Coq: Taking Proof General one step closer
  to a real IDE}. In \bibinfo{booktitle}{\emph{CoqPL'16: The Second
  International Workshop on Coq for PL}}.
\newblock
\urldef\tempurl%
\url{https://doi.org/10.5281/zenodo.44331}
\showDOI{\tempurl}


\bibitem[\protect\citeauthoryear{Plotkin and Power}{Plotkin and Power}{2003}]%
        {plotkin2003algebraic}
\bibfield{author}{\bibinfo{person}{Gordon Plotkin} {and} \bibinfo{person}{John
  Power}.} \bibinfo{year}{2003}\natexlab{}.
\newblock \showarticletitle{Algebraic operations and generic effects}.
\newblock \bibinfo{journal}{\emph{Applied Categorical Structures}}
  \bibinfo{volume}{11}, \bibinfo{number}{1} (\bibinfo{year}{2003}),
  \bibinfo{pages}{69--94}.
\newblock


\bibitem[\protect\citeauthoryear{Plotkin and Pretnar}{Plotkin and
  Pretnar}{2013}]%
        {plotkin2013handling}
\bibfield{author}{\bibinfo{person}{Gordon~D. Plotkin} {and}
  \bibinfo{person}{Matija Pretnar}.} \bibinfo{year}{2013}\natexlab{}.
\newblock \showarticletitle{Handling Algebraic Effects}.
\newblock \bibinfo{journal}{\emph{Logical Methods in Computer Science}}
  \bibinfo{volume}{9}, \bibinfo{number}{4} (\bibinfo{year}{2013}).
\newblock


\bibitem[\protect\citeauthoryear{Tofte and Talpin}{Tofte and Talpin}{1997}]%
        {tofte1997region}
\bibfield{author}{\bibinfo{person}{Mads Tofte} {and}
  \bibinfo{person}{Jean-Pierre Talpin}.} \bibinfo{year}{1997}\natexlab{}.
\newblock \showarticletitle{Region-Based Memory Management}.
\newblock \bibinfo{journal}{\emph{Inf. Comput.}} \bibinfo{volume}{132},
  \bibinfo{number}{2} (\bibinfo{date}{Feb.} \bibinfo{year}{1997}),
  \bibinfo{pages}{109–176}.
\newblock
\showISSN{0890-5401}
\urldef\tempurl%
\url{https://doi.org/10.1006/inco.1996.2613}
\showDOI{\tempurl}


\bibitem[\protect\citeauthoryear{Zhang and Myers}{Zhang and Myers}{2019}]%
        {zhang2019abstraction}
\bibfield{author}{\bibinfo{person}{Yizhou Zhang} {and}
  \bibinfo{person}{Andrew~C. Myers}.} \bibinfo{year}{2019}\natexlab{}.
\newblock \showarticletitle{Abstraction-safe Effect Handlers via Tunneling}.
\newblock \bibinfo{journal}{\emph{Proc. ACM Program. Lang.}}
  \bibinfo{volume}{3}, \bibinfo{number}{POPL}, Article \bibinfo{articleno}{5}
  (\bibinfo{date}{Jan.} \bibinfo{year}{2019}), \bibinfo{numpages}{29}~pages.
\newblock
\showISSN{2475-1421}


\end{thebibliography}

\appendix
\section{Appendix: Typing $\id{List}$}

We can represent a list using a function that takes a function $g$ and applies it to the elements of the list.
Specifically, $g$ takes an element of the list $v$ and an already accumulated result $s$ and returns a new
accumulated result. The list applies $g$ to the elements of the list in turn to yield a final accumulated result.
Concretely, the type of a list of elements of type $T$ is:
\begin{align*}
\id{List}[T] \equiv
\Capt{\cset{}}{\Ttlam{C}{\Capt{\UC}{\Top}}{\Capt{\cset{}}{\Tlam{g}{\Capt{\UC}{\Tlam{v}{T}{\Capt{\UC}{\Tlam{s}{C}{C}}}}}{\Capt{\cset{g}}{\Tlam{s}{C}{C}}}}}}
\end{align*}

We define an abbreviation for the type of the function $g$:
\begin{align*}
\id{Op}[T,C] \equiv
\Capt{\UC}{\Tlam{v}{T}{\Capt{\UC}{\Tlam{s}{C}{C}}}}
\end{align*}

Then the list type can be abbreviated to:
\begin{align*}
\id{List}[T] \equiv
\Capt{\cset{}}{\Ttlam{C}{\Capt{\UC}{\Top}}{\Capt{\cset{}}{\Tlam{g}{\id{Op}[T,C]}{\Capt{\cset{g}}{\Tlam{s}{C}{C}}}}}}
\end{align*}

The term representing an empty list ignores $g$ and just applies an identity function to the initial accumulated result:
\begin{align*}
\id{nil} \equiv
\tlam{T}{\Capt{\UC}{\Top}}{\tlam{C}{\Capt{\UC}{\Top}}{\lam{g}{\id{Op}[T,C]}{\lam{s}{C}{s}}}}
\end{align*}

The term representing a cons cell first recurses on the tail of the list, and finally applies $g$ to the head:
\begin{align*}
&\id{cons} \equiv\\
&\tlam{T}{\Capt{\UC}{\Top}}{\lam{hd}{T}{\lam{tl}{\id{List}[T]}{\tlam{C}{\Capt{\UC}{\Top}}{\lam{g}{\id{Op}[T,C]}{\lam{s}{C}{\app{\app{g}{hd}}{\left(\app{\app{\tapp{tl}{C}}{g}}{s}\right)}}}}}}}
\end{align*}

We can now implement the $\id{map}$ function from Section~\ref{sec:list} as follows:
\begin{align*}
&\id{map} \equiv\\
&\tlambr{A}{\Capt{\cset{}}{\Top}}{\tlambr{B}{\Capt{\cset{}}{\Top}}{\lambr{xs}{\id{List}[A]}{\lambr{f}{\Capt{\UC}{\Tlam{a}{A}{B}}}{\app{\app{\tapp{xs}{\id{List}[B]}}{\lam{elem}{A}{\lam{accum}{\id{List}[B]}{\app{\app{\tapp{\id{cons}}{B}}{\left(\app{f}{elem}\right)}}{accum}}}}}{\left(\tapp{\id{nil}}{B}\right)}}}}}
\end{align*}

The $\id{map2}$ function, which swaps the order of $f$ and $xs$, can be implemented as follows with the same
function body:
\begin{align*}
&\id{map2} \equiv\\
&\tlambr{A}{\Capt{\cset{}}{\Top}}{\tlambr{B}{\Capt{\cset{}}{\Top}}{\lambr{f}{\Capt{\UC}{\Tlam{a}{A}{B}}}{\lambr{xs}{\id{List}[A]}{\app{\app{\tapp{xs}{\id{List}[B]}}{\lam{elem}{A}{\lam{accum}{\id{List}[B]}{\app{\app{\tapp{\id{cons}}{B}}{\left(\app{f}{elem}\right)}}{accum}}}}}{\left(\tapp{\id{nil}}{B}\right)}}}}}
\end{align*}

Finally, the $\id{pureMap}$ function also has the same function body, but the parameter type for the function $f$ enforces that this function is pure:
\begin{align*}
&\id{pureMap} \equiv\\
&\tlambr{A}{\Capt{\cset{}}{\Top}}{\tlambr{B}{\Capt{\cset{}}{\Top}}{\lambr{xs}{\id{List}[A]}{\lambr{f}{\Capt{\cset{}}{\Tlam{a}{A}{B}}}{\app{\app{\tapp{xs}{\id{List}[B]}}{\lam{elem}{A}{\lam{accum}{\id{List}[B]}{\app{\app{\tapp{\id{cons}}{B}}{\left(\app{f}{elem}\right)}}{accum}}}}}{\left(\tapp{\id{nil}}{B}\right)}}}}}
\end{align*}

We have constructed typing derivations for all of these terms to make sure that they have the claimed types.
\end{document}